\documentclass[a4paper, 12pt]{article} 

\usepackage{fullpage}
\usepackage[utf8]{inputenc}
\usepackage[T1]{fontenc}
\usepackage[UKenglish]{babel}
\usepackage{lmodern}

\usepackage{nicefrac}
\usepackage{mathtools}
\usepackage{amsfonts, dsfont}
\usepackage{amsthm, stmaryrd}
\usepackage{verbatim, hyperref}
\hypersetup{pdfborder=0 0 0}
\usepackage{ulem}
\usepackage[dvipsnames]{xcolor}
\setlength{\oddsidemargin}{0pt}
\setlength{\textwidth}{470pt}
\setlength{\marginparsep}{0pt} \setlength{\marginparwidth}{60pt}
\setlength{\topmargin}{20pt} \setlength{\headheight}{0pt}
\setlength{\headsep}{0pt} \setlength{\textheight}{640pt}
\setlength{\footskip}{20pt}
\usepackage{subcaption}
\usepackage{soul}
\usepackage{natbib}

\usepackage{graphics}
\usepackage{graphicx}
\usepackage{svg}
\usepackage[font=footnotesize]{caption}
\graphicspath{ {./pics5/} }
\usepackage{wrapfig, lipsum}

\usepackage{floatrow}
\usepackage{subcaption}

\usepackage[plain]{algorithm}
\usepackage{algorithmic}


\newtheorem{prop}{Proposition}[section]

\newtheorem{lem}[prop]{Lemma}
\newtheorem{thm}[prop]{Theorem}

\newtheorem{ass}[prop]{Assumption}

\theoremstyle{remark}
\newtheorem{rem}[prop]{Remark}
\newtheorem{example}[prop]{Example}


\newcommand{\sen}{\mathtt{na}}
\newcommand{\nsen}{\mathtt{a}}

\newcommand{\overhang}{\omega}

\newcommand{\kb}{\text{kb}}
 
\title{Stochastic branching models for the telomeres dynamics in a model including telomerase activity} 
\author{A. Benetos$^{1,2}$, C. Fritsch$^{3}$, E. Horton$^{4}$, L. Lenotre$^{5}$,\\ S. Toupance$^{1}$ and D. Villemonais$^{6,7}$}

\begin{document}

 \footnotetext[1]{Université de Lorraine, Inserm, DCAC, F-54000 Nancy, France}
 \footnotetext[2]{Université de Lorraine, CHRU-Nancy, Service de Gériatrie, F-54000 Nancy, France}
 \footnotetext[3]{Université de Lorraine, CNRS, Inria, IECL, F-54000 Nancy, France}
\footnotetext[4]{Department of Statistics, University of Warwick, Coventry, CV4 7AL, UK}
\footnotetext[5]{Université de Haute-Alsace, Faculté des Sciences et Techniques, F-68200 Mulhouse, France}
\footnotetext[6]{Université de Strasbourg, IRMA, F-67000 Strasbourg, France}
\footnotetext[7]{Institut universitaire de France (IUF)}

\maketitle

\begin{abstract}
Telomeres are repetitive sequences of nucleotides at the end of chromosomes, whose evolution over time is intrinsically related to biological ageing. In most cells, with each cell division, telomeres shorten due to the so-called end replication problem,
which can lead to replicative senescence and a variety of age-related diseases. 
On the other hand, in certain cells, the presence of the enzyme telomerase can lead to the lengthening of telomeres, which may delay or prevent the onset of such diseases but can also increase the risk of cancer.

In this article, we propose a stochastic representation of this biological model, which takes into account multiple chromosomes per cell, the effect of telomerase, different cell types and the dependence of the distribution of telomere length on the dynamics of the process. We study theoretical properties of this model, including its long-term behaviour. In addition, we investigate numerically the impact of the model parameters on biologically relevant quantities, such as the Hayflick limit and the Malthusian parameter of the population of cells.
\end{abstract}

\medskip \textbf{Keywords:} branching processes, non-conservative semi-groups, many-to-one formula, quasi-stationary distribution, Hayflick limit, telomerase.

\medskip \textbf{MSC:} 60J80, 60J85, 60K40.

\section{Introduction}\label{sec:intro}

 Telomeres are repetitive sequences of nucleotides located at the ends of linear chromosomes that act as protective caps to ensure the genomic stability. 
 As cells divide, telomeres gradually shorten, ultimately reaching a critical length triggering cellular senescence or apoptosis~\citep{OLOVNIKOV1996443,EntringerPunderEtAl2018,xu2019many}, processes implicated in a number of age-related diseases such as cardiovascular, metabolic and neurodegenerative diseases \citep{Dmello2015, Haycock2017, Toupance_Benetos_2019}.
In addition, research has shown that factors such as lifestyle, genetics and oxidative stress can also 
impact telomere length dynamics \citep{StarkweatherAlhaeeriEtAl2014, Shammas2011, Louzon2019}. 

On the other hand, some cell types can express an enzyme called telomerase. This DNA polymerase has the ability to add telomeric repeats to the end of telomeres during DNA replication, which can compensate the effects of telomere shortening but also plays an important role in the development and evolution of cancer~\citep{jafri2016roles,blasco2005telomeres}.

In the past 30 years, mathematical modelling has played an important role in understanding the long-term behaviour of the distribution of telomere length in cells and the associated health implications.
One of the earliest (deterministic) models for the evolution of telomere lengths was given by \cite{Levy1992}, which mirrored the results seen in \textit{in vitro} experiments. Shortly after, \cite{Arino1995} and \cite{Olofsson1998} reframed this model in terms of a branching process to obtain both exact and asymptotic results for the behaviour of telomeres. Since then, there have been a wide range of stochastic models for telomere lengths, both with and without the presence of telomerase, for the purpose of understanding the role of the shortest telomere in senescence \citep{BourgeronXuEtAl2015, DaoDuc2013}, calculating the time until senescence \citep{Eugene2017}, parameter inference \citep{LeeKimmel2020} and calculating the stationary distribution \citep{LeeKimmel2020}. We also refer the reader to the articles of \cite{Portugal2008, WattisQiEtAl2020, Wattis2014} for computational analysis of both deterministic and stochastic models.

In this article, our first main contribution is to propose a mathematical model that describes the evolution in continuous time of a population of cells, whose dynamics depend explicitly on the lengths of their telomeres. This is in accordance with recent empirical evidence \citep{BourgeronXuEtAl2015, DaoDuc2013, XuDucEtAl2013}, which suggests a strong link between the length of the shortest telomere and the cell's behaviour. The main novelty in our approach is that we allow for different cell types, multiple chromosomes per cell {\it and} we also take into account the effect of telomerase.  Moreover, we model the dynamics of the {\it whole population} of cells, which requires one to keep track of the state of all telomeres in each individual cell within a growing cell population. To the best of our knowledge, this is the first mathematical model that takes all of these components into account simultaneously.

Among the models in the aforementioned literature, some of them describe the whole cell population, while others describe a single cell lineage.
While both approaches are relevant, we emphasize that the latter approach does not appropriately describe the distribution of the whole population when reproduction and/or senescence rates of a cell depend on its telomere lengths. This is due to an imbalance of mass compared to the whole population of cells. To counteract this issue, one needs to appropriately weight the single linear process, as in the \textit{many-to-one formula}. In the setting of telomere dynamics, this is given in Lemma~\ref{lem:manytoone}, which allows one to represent the first moment of the entire cell population by the average behaviour of a typical trajectory in the population, appropriately weighted. This result allows one to more easily analyse the behaviour of the branching process, and provides a useful tool for numerical simulations.

One of the main mathematical contributions of this paper, Theorem~\ref{thm:mainresult}, is related to the asymptotic behaviour of the first moment of the particle system. In the case that telomerase is present and compensates the attrition of telomeres, so that the population grows indefinitely, we show two things. Firstly, the distribution of an average cell and its telomere lengths converges with time, and secondly, the first moment of the population size grows asymptotically exponentially fast.  This analysis is related to the theory of quasi-stationary distributions (see e.g.~\cite{ColletMartinezEtAl2013, DoornPollett2013,MeleardVillemonais2012}). While the analysis presented in this article is fairly standard in the literature on quasi-stationary distributions, to the best of our knowledge, this is the first time these techniques have been applied in the context of telomere dynamics.

We also present numerical simulations of the model, both with and without telomerase. In the case where it has very little impact or is not present at all, the population eventually stabilizes due to the fact that cells can no longer divide when their telomeres are too short. In this case, the number of times the population doubles is finite; this effect is also called the \textit{Hayflick limit} \citep{HayflickMoorhead1961, ShayWright2000, trentesaux2010senescence}. Thus, we investigate the influence of the model parameters on the Hayflick limit. 
On the other hand, when telomerase is present, we study its influence on the asymptotic growth rate of the population and the stabilization of the distribution of an average cell and its telomere lengths, illustrating Theorem~\ref{thm:mainresult} in practice.

The rest of the paper is set out as follows. In Section~\ref{sec:model}, we present the specific model that we will work with throughout the paper, presenting first the biological mechanism for DNA replication at the level of chromosomes, followed by the stochastic process describing the population of cells. This model takes into account the telomere length distribution of chromosomes, as well as the cell's ability to divide or not. In addition, we allow for the possibility that the dynamics depend on the absence/presence of telomerase.
Section~\ref{sec:semigroup} is devoted to the mathematical analysis of the first moment of the mathematical model, leading to the {\it many-to-one formula} (see Lemma~\ref{lem:manytoone}). In Section~\ref{sec:QSD}, we establish and prove our main mathematical result, which characterises the long-term behaviour of our process (see Theorem~\ref{thm:mainresult}). In particular, we describe the growth rate of the average number of particles in the system and the corresponding stationary distribution. 
Section~\ref{sec:simulations} contains the numerical simulations, which illustrate the impact of the model parameters on 1) the Hayflick limit (the number of times the population doubles), 2) the asymptotic growth rate of the population, and 3) the asymptotic distribution of telomere lengths in the cell population.
Finally, in Section~\ref{sec:discussion}, we discuss our model, results and possible future work and extensions. In addition, we alert the reader to the table of notation in Appendix \ref{sec:notation}, which collects the main parameters used for the simulations in Section \ref{sec:simulations}.

\section{The model}
\label{sec:model}
\subsection{The biological model}\label{sec:bio}
In this section, we describe the biological DNA replication mechanism we adhere to throughout the article. 
For concreteness, we will describe the replication mechanism that occurs in human cells, however, the mathematical model we consider in the next section is robust enough to apply to other organisms. 
We will also only describe the relevant aspects of the DNA replication process, and we refer the reader to the articles of \cite{Grach2013, Aubert2014} and references therein for further details. 

Human cells contain 23 pairs of chromosomes, each composed of a DNA strand whose extremities are called telomeres. Each strand of the DNA double helix has a 3' and a 5' end. Due to the antiparallel structure of the double helix, the 3' end of one strand opposes the 5' end of the other strand. Moreover, the 3' end overhangs the opposing 5' end (see Figure~\ref{fig.chrom.before.div}). 

Starting from the centre of the chromosome (referred to as the origin of replication), the DNA helix is `unzipped' in the direction of the telomeres, creating two `Y' shapes, each called a replication fork (see Figure~\ref{fig.rep.fork}). 
The two separated strands will then act as templates for an enzyme (DNA polymerase) to make new complementary DNA strands, resulting in two copies of the original chromosome. 

The orientation of the DNA helix means that when it is unzipped, in one direction it unzips from the 3' to the 5' end, which will form the {\it leading} strand, while in the other direction the helix is unzipped in the 5' to 3' direction, and will form the {\it lagging} strand. Due to the way DNA polymerase replicates DNA and the antiparallel nature of a DNA helix, the mechanism is different for each of the two strands. Indeed, as the DNA polymerase synthesises the leading strand, DNA fragments are added continuously to form the new complementary (antiparallel) strand from the origin to the end of the 5' end of the template. Note that the new complementary strand will be shorter than the original one since DNA polymerase can only use the 3' to 5' strand as a template, resulting in the loss of the original overhang. In order to re-establish the overhang, an enzyme called nuclease removes the end of the template 5' end so that the new 3' is longer. On the other hand, when DNA polymerase synthesises the lagging strand, it does so in separated segments, called Okazaki fragments, using a `back-stitch' type method. Due to the discontinuous nature of the replication along the lagging strand, DNA cannot be replicated all the way to the end of the template strand, hence also resulting in an overhang of the 3' end. We refer the reader to Figures~\ref{fig.chrom.before.div}, \ref{fig.rep.fork} and~\ref{fig.repli.before.tel} for a diagram of these steps, as well as the articles of \cite{Grach2013} and~\cite{MurakiEtAl2012} for further details and references.

Thus, as a result of the mechanism described above, telomeres become shorter with each cell division. This is known as the \textit{end replication problem}. When the shortest telomere in a cell becomes too short, the cell is unable to divide any further (see  \cite{BourgeronXuEtAl2015, Hemann2001}), since another division would risk damaging the DNA. 
When this occurs, the cell becomes senescent, which we refer to as deactivation in the rest of the paper. Thus, if the shortest telomere in each cell falls below a certain threshold, the number of cells in the population ceases to increase. This leads to a concept called the Hayflick limit, which is defined as the number of times a population of cells can double (before cell division is no longer possible).

In certain cells, such as stem cells, the majority of cancer cells and some somatic cells, an enzyme called telomerase is present, which provides a mechanism by which telomeres can lengthen. 
In these cells, after DNA replication has occurred, telomerase has the ability to add new nucleotides to the 3' end. After a certain number of nucleotides have been added, the complementary sequence of nucleotides is then added to the corresponding 5' end (see  Figure~\ref{fig.daughter.cells}, and \cite{BlackburnCollins2011, Rhyu1995} for further details).

As discussed by \cite{TeixeiraArnericEtAl2004}, the number of nucleotides added varies between a few to more than 100 nucleotides, and the number added is independent of the length of the strand. However,  when telomerase is present, it does not lengthen every telomere in the cell (as illustrated in Figure~\ref{fig.daughter.cells}); it is more likely to elongate a telomere if the telomere is shorter (we also refer the reader to the telomere length regulation model proposed by~\cite{Greider2016}).

At the end of the replication process (including the possible elongation of telomeres), the resulting object is a pair of chromosomes, with one being given to each of the daughter cells.

\begin{figure}
\captionsetup[subfigure]{justification=centering}
\begin{center}
\begin{subfigure}{1\textwidth}
        \includegraphics[width=13.2cm, trim = 3cm 19.3cm 7.5cm 8.8cm, clip=true]{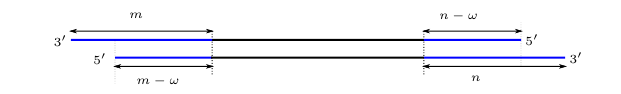}
        \caption{\label{fig.chrom.before.div}Chromosome before cell division.}
 \end{subfigure}
 \begin{subfigure}{1\textwidth}
        \includegraphics[width=13.2cm, trim = 4.4cm 16.3cm 6cm 10.5cm, clip=true]{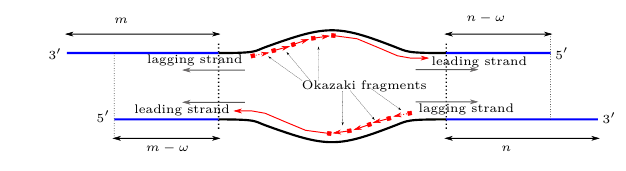}
        \caption{\label{fig.rep.fork}Replication bubble and replication forks}
 \end{subfigure}
 \begin{subfigure}{1\textwidth}
        \includegraphics[width=13.2cm, trim = 3cm 13.5cm 7.5cm 13.5cm, clip=true]{fig_tel_rep.pdf}
        \caption{\label{fig.repli.before.tel} The two daughter chromosomes before lengthening by telomerase}
 \end{subfigure}
  \begin{subfigure}{1\textwidth}
        \includegraphics[width=13.2cm, trim = 3cm 9.8cm 7.5cm 16.5cm, clip=true]{fig_tel_rep.pdf}
        \caption{\label{fig.daughter.cells}The two daughter chromosomes after lengthening by telomerase}
 \end{subfigure}
\end{center}
	\caption{\label{fig:scheme_bio}Complete reproduction of a chromosome with the new overhangs at each end.}
\end{figure}

\subsection{Branching model}\label{sec:branchingmodel}
We now propose a stochastic model for the evolution of a population of cells and their chromosomes based on the biological mechanism presented in the previous section. 
In particular, we will define a branching process that mimics the behaviour of a population of dividing cells as a function of their telomeres.

We first introduce some notation in order to keep track of the telomere lengths in each chromosome. Recall that the 3' end overhangs the opposing 5' end of a DNA strand. We denote by $\overhang$ \label{notation:overgang} the length of this overhang and assume that it is the same for all telomeres. We will also write $(m, n)$\label{notation:mn} to denote a chromosome whose 3' ends are of length $m$ and $n$ (classical units are either nucleotides (nt) for single strand regions or base pair (bp) and kilobase (kb) for double stranded regions, and, for sake of simplicity we will use kilobases throughout the paper). Using this notation, it follows that at one end of the chromosome the 3' end and its opposing 5' end are of length $m$ and $m-\overhang$, respectively, while at the other end of the chromosome, they are of length $n$ and $n-\overhang$, see Figure~\ref{fig.chrom.before.div}.

If telomerase is not expressed then, after replication, the chromosome $(m,n)$ gives rise to two daughter chromosomes $(m, n-\overhang)$ and $(m-\overhang,n)$ (see Figure~\ref{fig.repli.before.tel}). 
However, when telomerase is present, each of the four new telomeres can be elongated. Recall that if the 3' end of one strand is elongated then the complementary 5' end will also be elongated, keeping an overhang of size $\overhang$. We will assume that the four telomeres (two per daughter chromosome) are elongated independently of the other telomeres, and with a probability that depends on its length. Moreover, the size of the elongation is assumed to be independent of the length of the telomere.

To represent this mathematically, let $q(\ell)$\label{notation:qell} denote the probability that a telomere 
with length $\ell$ will be elongated, and let $\chi_\ell$\label{notation:chiell} denote a Bernoulli random variable with parameter $q(\ell)$. Then, given $\chi_\ell = 1$, we let $L$\label{notation:L} denote the (random) quantity added to the telomere, whose law is given by a probability distribution $\mu_f$ \label{notation:mu_f} on $\mathbb N=\{0,1,2,\ldots\}$. Writing $\Lambda_\ell = \chi_\ell L$, \label{notation:Lambda} the two resulting daughter chromosomes are thus given by $(m + \Lambda_m, n - \overhang + \Lambda'_{n-\overhang})$ and $(m - \overhang + \Lambda''_{m-\overhang}, n + \Lambda'''_{n})$ (see Figure~\ref{fig.daughter.cells}), where  $(\Lambda_{\ell})_{\ell\in\mathbb N}$, $(\Lambda_\ell')_{\ell\in\mathbb N}$, $(\Lambda_\ell'')_{\ell\in\mathbb N}$ and $(\Lambda_\ell''')_{\ell\in\mathbb N}$ are mutually independent and identically distributed. 
In the example of Figure~\ref{fig.daughter.cells}, $\Lambda_n'''=0$. 

\medskip

\begin{rem}
    According to the empirical evidence presented by~\cite{TeixeiraArnericEtAl2004}, we choose the distribution $\mu_f$ to be independent of the length of the telomere $\ell$. However, the mathematical results and proofs developed in the next section can be easily adapted to the more general setting where the distribution does depend on $\ell$.
\end{rem}

\medskip

Apart from their telomere lengths, cells are also represented by their state: the cell can be in an active or a non-active state. In an active state, it can undergo replication, deactivation (i.e. cellular senescence, defined as an irreversible arrest of cell proliferation \citep{HAYAT20173}), or be removed from the system (including in particular apoptosis). In a non-active state (i.e. after senescence), the cell cannot replicate, and can thus only be removed from the system.

\medskip

With this in mind, we now define our model at the level of the cells. Let $N_t$\label{notation:Nt} be the number of cells in the system at time $t \ge 0$. The collection of cells will be denoted by $\{(c_i(t), x_i(t)) : i = 1, \dots, N_t \}$, where $c_i(t)$ represents the lengths of the telomeres in the $i$-th cell at time $t$ and $x_i(t)$ represents its state (active or non-active). 
More precisely, letting $K$\label{notation:K} denote the number of chromosomes in a cell, we write $c_i(t) = (m_{i, j}(t), n_{i, j}(t))_{j = 1}^K\in (\mathbb N\times\mathbb N)^K$, where $(m_{i, j}(t), n_{i, j}(t))$ represents the $j^{th}$ chromosome in cell $i$ at time $t \ge 0$. Moreover, if the $i^{th}$ cell is active we set $x_i(t)=\nsen$,\label{notation:sennsen} and otherwise $x_i(t)=\sen$.
Note that $N_t$ also counts the number of non-active cells. 
We assume that if the minimum telomere length of a cell is smaller than a given value $L_{\min}\geq \overhang$\label{notation:Lmin}, then it is inactive.

For a given element $c = (m_j, n_j)_{j = 1}^K \in (\mathbb N\times\mathbb N)^K$, we let $\min c:=\min\{m_j,n_j,\ 1\leq j\leq K\}$ denote the minimum telomere length in the cell and $\max c:=\max\{m_j,n_j,\ 1\leq j\leq K\}$ the maximum telomere length in the cell.
We thus denote by $E=E_{\nsen} \cup E_{\sen}$ the set of possible `values' a cell can take, with $E_{\nsen}$ and $E_{\sen}$ denoting the set of values of the active cells and non-active cells, respectively, i.e.
\[
E_{\nsen}:=\left\{c\in (\mathbb N\times \mathbb N)^K\, :\, \min c\geq L_{\min}\right\}\times \{\nsen\}\text{ and }E_{\sen}:=(\mathbb N\times \mathbb N)^K\times \{\sen\}.
\]
Hence, a cell is represented by an element of the form $(c, x) \in E$.

\medskip

Let 
\[
X_t \coloneqq \sum_{i = 1}^{N_t}\delta_{(c_{i}(t),x_i(t))}, \quad t \ge 0,
\] 
denote the population of cells at time $t$, which is an element of $\mathcal M(E)$, the set of non-negative discrete measures on $E$.
The system evolves as a branching process so that, given their point of creation, cells evolve independently of each other
 according to the following dynamics. In what follows, and throughout the rest of the article, we use the term {\it rate} for a function $r_{\mathtt x}(s, c)$, $\mathtt x \in \{\sen, \nsen\}, s \ge 0, c \in E$, to mean there is a random time, $T$, whose distribution is given by ${\rm Pr}_{u, (c, \mathtt x)}(T > t) = {\rm e}^{-\int_u^t r_{\mathtt x}(s, c){\rm d}s}$, $0 \le u \le t$.
 
 An active cell $(c,\nsen) = \left((m_j, n_j)_{j = 1}^K,\nsen\right)\in E_{\nsen}$ will remain in the same state until one of the following events occur.
\begin{itemize}\itemsep1em
\item At rate $d_\nsen(t,c)$, at time $t \ge 0$,\label{notation:dx}  the cell  is removed from the system:
\[
    X_t = X_{t-}-\delta_{(c,\nsen)}.
\]
Biologically, this represents the event that the cell is removed from the system via e.g. apoptosis. 

\item At rate $s_\nsen(t,c)$\label{notation:snsen}, at time $t\geq 0$, the cell  becomes non-active, which means that $x$ switches from $\nsen$ to $\sen$:
\[
    X_t = X_{t-}-\delta_{(c,\nsen)}+\delta_{(c,\sen)},
\]
which represents the physical phenomena of cellular senescence. 
\item at rate $b_\nsen(t, c)$\label{notation:bnsen}, at time $t \geq 0$, the cell divides into two daughter cells given by, 
$$c^{1} :=c^1(c) = \{(m_j -  \overhang B_j + \Lambda_{m_j -\overhang B_j ,j}, n_j - \overhang(1-B_j) + \Lambda'_{n_j- \overhang(1-B_j),j})_{j = 1}^K\}$$ and 
$$c^{2} :=c^2(c)= \{(m_j -  \overhang(1-B_j) + \Lambda_{m_j-  \overhang(1-B_j),j}'', n_j - \overhang B_j + \Lambda_{n_j- \overhang B_j ,j}''')_{j = 1}^K\},$$ where $B_j$, $j = 1, \dots, K$ are a collection of independent Bernoulli random variables with parameter $1/2$, describing the allocation of daughter chromosomes in both daughter cells, and the $\Lambda_{\ell,j}$, $\Lambda_{\ell,j}'$, $\Lambda_{\ell,j}''$ and $\Lambda_{\ell,j}'''$ are independent (of each other and the $B_j$), with law described in the previous subsection. If $\min c^{i}< L_{\min}$, for $i\in\{1,2\}$, then the cell becomes non-active immediately, and if $\min c^{i}\geq  L_{\min}$, we assume that the cell is, as its mother cell, active. 
Hence, the term $\delta_{(c,\nsen)}$ of $X_{t-}$ is replaced by 
\[1_{\min c^{1}\geq  L_{\min}}\times\delta_{(c^1,\nsen)}+1_{\min c^{1}<  L_{\min}}\times\delta_{(c^1,\sen)}+1_{\min c^{2}\geq  L_{\min}}\times\delta_{(c^2,\nsen)}+1_{\min c^{2}<  L_{\min}}\times\delta_{(c^2,\sen)}.
\]
Physically, this describes the event of cell replication and the possible configurations of the daughter cells.
\end{itemize}

A  non-active cell $(c,\sen)\in E_{\sen}$ 
remains in the same state until it is removed from the system, at rate  $d_\sen(t,c)$, at time $t \ge 0$:
\[
    X_t = X_{t-}-\delta_{(c,\sen)}.
\]
Again, this is the mathematical representation of apoptosis or other mechanisms that may cause cells to `die'.

\section{Mean semigroup of the branching process}
\label{sec:semigroup}
\subsection{Evolution equation of the mean semigroup}
In this section, we consider the dynamics of the expectation of the branching process $X_t$. We thus define the expectation semigroup associated to this process. For $(c,x)\in E$ and a bounded measurable function $g:E\to\mathbb{R}$, set
\begin{equation}
\psi_{u,t}[g](c, x) \coloneqq \mathbb{E}_{u,(c, x)}\left[\langle g, X_t \rangle\right] \coloneqq \mathbb{E}_{u,(c, x)}\left[\sum_{i = 1}^{N_t}g(c_{i}(t), x_i(t))\right], \quad t \ge u \ge 0, 
\label{expsg}
\end{equation}
 where $\mathbb{E}_{u,(c, x)}$ is the expectation operator associated to the law $\mathbb{P}_{u,(c, x)}(\cdot):=\mathbb{P}(\cdot | X_u=\delta_{(c,x)})$, i.e. the law of the process whose initial population is composed of a single cell, $(c, x) \in E$ at time $u$.

 In order to understand the evolution of the mean semigroup $(\psi_{u, t})_{0 \le u \le t}$, we introduce the following assumption that ensures only a finite number of events can occur in finite time intervals.
\begin{ass}\label{as:ratebound}
The birth, deactivation and removal rates, $b_\nsen$, $s_\nsen$, $d_\nsen$ and $d_\sen$, are bounded.
\end{ass}

Based on the dynamics described in the previous section, we have the following proposition, which gives the evolution equation for $(\psi_{u,t})_{0\leq u \leq t}$. In what follows, we say that the semigroup $\psi$ is {\it bounded in time} if $\|\psi_{u,t}[g]\|_\infty<\infty$ for all positive bounded measurable functions $g$ and for all $0\leq u \leq t$.
\begin{prop}\label{expsemilem}
Let $g:E\to\mathbb R$ be a bounded measurable function, $0 \le u \le t$, $(c,x) = \left((m_j, n_j)_{j = 1}^K,x\right) \in E$. Under Assumption~\ref{as:ratebound}, the expectation semigroup $(\psi_{u,t})_{0\leq u\leq t}$ is the unique solution, that is bounded in time, to
\begin{multline}
\psi_{u,t}[g](c,x) = g(c,x) + \mathbf{1}_{\{x=\nsen\}}\int_u^t  b_\nsen(v, c)\left[\mathcal{F}[\psi_{v,t}[g]](c,\nsen) - \psi_{v,t}[g](c,\nsen)\right] {\rm d}v\\
+ \mathbf{1}_{\{x=\nsen\}} \int_u^t s_\nsen(v, c)\left[\psi_{v,t}[g](c,\sen) - \psi_{v,t}[g](c,\nsen)\right]- \int_u^t d_x(v, c)\psi_{v,t}[g](c,x){\rm d}v,
\label{eveq1}
\end{multline}
where the operator $\mathcal{F}$ is defined by
\begin{align}
\nonumber
\mathcal{F}[g](c,\nsen) = \mathbb{E}\Big[&\mathbf{1}_{\min c^1\geq L_{\min}}g(c^1,\nsen) + \mathbf{1}_{\min c^1< L_{\min}}g(c^1,\sen) 
\\ \label{branchop}
&\quad + \mathbf{1}_{\min c^2\geq L_{\min}}g(c^2,\nsen) + \mathbf{1}_{\min c^2< L_{\min}}g(c^2,\sen)\Big],
\end{align}
with
\begin{align*}
&c^1=(m_j -  \overhang B_j + \Lambda_{m_j-  \overhang B_j ,j}, n_j - \overhang(1-B_j) + \Lambda'_{n_j- \overhang(1-B_j) ,j})_{j = 1}^K,\\
&c^2=(m_j -  \overhang(1-B_j) + \Lambda_{m_j-  \overhang(1-B_j),j}'', n_j - \overhang B_j + \Lambda'''_{n_j- \overhang B_j,j })_{j = 1}^K,
\end{align*}
and where the expectation $\mathbb{E}$ is taken with respect to the law of the $B_j$, $\Lambda_{\ell,j}$, $\Lambda_{\ell,j}'$, $\Lambda_{\ell,j}''$ and $\Lambda_{\ell,j}'''$.
\end{prop}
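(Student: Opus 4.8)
## Proof proposal

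The plan is to derive equation~\eqref{eveq1} by a first-step (first-event) decomposition of the branching process, conditioning on what happens to the initial cell during a small time interval, and then to argue uniqueness via a Gronwall-type estimate.

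\textbf{Derivation of the evolution equation.} First I would fix $(c,x)\in E$ and condition on the time $T$ and nature of the first event affecting the initial cell. Since all rates $b_\nsen, s_\nsen, d_\nsen, d_\sen$ are bounded by Assumption~\ref{as:ratebound}, the first event occurs at a time $T$ with a density that is bounded and bounded away from degeneracy, and only finitely many events occur on $[u,t]$ almost surely. If $x = \sen$, the only possible event is removal at rate $d_\sen(v,c)$, and before it the population stays equal to $\delta_{(c,\sen)}$; a direct computation with the inhomogeneous exponential clock gives $\psi_{u,t}[g](c,\sen) = g(c,\sen)\exp(-\int_u^t d_\sen(v,c)\,\dd v)$, which one checks solves~\eqref{eveq1} restricted to $x=\sen$ (differentiate in $u$, or equivalently write the integral form directly). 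If $x=\nsen$, I condition on the first event among the three competing Poisson clocks (division at rate $b_\nsen(v,c)$, deactivation at rate $s_\nsen(v,c)$, removal at rate $d_\nsen(v,c)$): with probability $\exp(-\int_u^t (b_\nsen+s_\nsen+d_\nsen)(v,c)\,\dd v)$ no event occurs and the contribution is $g(c,\nsen)$ times that probability; otherwise the first event is at some $v\in(u,t)$, and by the branching (independence) property and the strong Markov property the expected value of $\langle g, X_t\rangle$ restarts from the post-event configuration. For division this post-event configuration is exactly the two-cell (or deactivated) population encoded by the operator $\mathcal F$ in~\eqref{branchop}, so the restart contributes $\psi_{v,t}[g]$ evaluated on $\mathcal F$; for deactivation it contributes $\psi_{v,t}[g](c,\sen)$; for removal it contributes $0$. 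Assembling these pieces yields an integral equation with the exponential survival factor in front; then I would differentiate in $u$ (or integrate the exponential factor by parts) to remove the exponential and land precisely on the mild form~\eqref{eveq1}. The boundedness-in-time of $\psi$ is immediate from a crude bound: by Gronwall, $\|\psi_{u,t}[g]\|_\infty \le \|g\|_\infty \exp(2\|b_\nsen\|_\infty (t-u))$, since each division at most doubles the mass.

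\textbf{Uniqueness.} Suppose $\phi^{(1)}$ and $\phi^{(2)}$ are two solutions of~\eqref{eveq1} that are bounded in time, and set $\Delta_{v,t} := \|\phi^{(1)}_{v,t}[g] - \phi^{(2)}_{v,t}[g]\|_\infty$. Subtracting the two equations and using that $\mathcal F$ is linear and satisfies $\|\mathcal F[h]\|_\infty \le 2\|h\|_\infty$ (it is a sum of two point evaluations of $h$, in one of two cells, in one of the two states), together with the boundedness of the rates, gives an estimate of the form $\Delta_{u,t} \le C \int_u^t \Delta_{v,t}\,\dd v$ for a constant $C$ depending only on the sup-norms of the rates and on $t$ (here $t$ is fixed and we run $u$ backwards from $t$). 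Gronwall's lemma (applied to the function $u \mapsto \Delta_{t-u,t}$) then forces $\Delta_{u,t} = 0$ for all $u\le t$.

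\textbf{Main obstacle.} The routine parts are the Gronwall estimates; the step that needs the most care is the rigorous first-event decomposition for $x=\nsen$ — in particular justifying that the expected population functional genuinely restarts at the post-division configuration described by $\mathcal F$ (this is the branching property plus the strong Markov property, and one must be careful that the daughter cells' randomness $B_j, \Lambda, \Lambda', \Lambda'', \Lambda'''$ is exactly the randomness averaged over inside $\mathcal F$), and that interchanging the expectation over this randomness with the (finite) sum defining $\langle g, X_t\rangle$ is licit — which it is, by Assumption~\ref{as:ratebound} and the doubling bound above, since everything in sight is integrable. I would also note that the indicator split $\mathbf 1_{\min c^i \ge L_{\min}}$ vs.\ $\mathbf 1_{\min c^i < L_{\min}}$ inside $\mathcal F$ is exactly the transcription of the deactivation-on-birth rule from Section~\ref{sec:branchingmodel}, so no additional argument is needed there beyond bookkeeping.
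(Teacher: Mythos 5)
Your proposal is correct and follows essentially the same route as the paper: a first-event decomposition on the competing clocks (division, deactivation, removal) yielding an integral equation with an exponential survival factor, a transfer to the mild form \eqref{eveq1} (the paper invokes Theorem~2.1 of the cited reference where you propose differentiating in $u$ or integrating by parts, which amounts to the same lemma), and uniqueness via the bound $\|\mathcal F[h]\|_\infty\leq 2\|h\|_\infty$ together with Gr\"onwall's inequality applied to $u\mapsto\Delta_{t-u,t}$.
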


\begin{proof}
The proof is similar to \cite[\S 6]{CoxHarrisEtAl2019}. Let us recall the main arguments.
By Assumption~\ref{as:ratebound}, $\psi$ is bounded in time. 
By conditioning $\psi_{u,t + t'}$ at time $t$ and applying the Markov property, it is a straightforward exercise to show that $(\psi_{u,t})_{t \ge u\geq 0}$ is a semigroup. To show that it satisfies \eqref{eveq1} first consider the case when $x = \nsen$. Splitting the expectation semigroup on the first event (branching, deactivation or removal), we have
\begin{align}
\psi_{u,t}[g](c, \nsen) 
&= g(c, \nsen){\rm e}^{-\int_u^t b_\nsen(v, c) + d_\nsen(v, c) + s_\nsen(v, c){\rm d}v} \notag\\
&+ \int_u^t b_\nsen(v, c){\rm e}^{-\int_u^v b_\nsen(w, c) + d_\nsen(w, c) + s_\nsen(w, c){\rm d}w}\mathcal{F}[\psi_{v,t}[g]](c, \nsen){\rm d}v \notag\\
& +  \int_u^t s_\nsen(v, c){\rm e}^{-\int_u^v b_\nsen(w, c) + d_\nsen(w, c) + s_\nsen(w, c){\rm d}w}\psi_{v,t}[g](c, \sen) {\rm d}v,
\label{NSeveqn}
\end{align}
where the second term follows from an application of the branching property and the strong  Markov property. 
Applying \ref{thm:Dynkin}, we obtain equation~\eqref{eveq1} for the case $x = \nsen$.

For the case where $x = \sen$, the cell can only remain non-active or be removed from the system. Thus splitting $\psi_{u,t}[g](c, \sen)$ on the first time the cell is removed from the system and again, applying \ref{thm:Dynkin}, we obtain
\begin{equation}
  \psi_{u,t}[g](c, \sen) = g(c, \sen) - \int_u^t d_\sen(v, c)\psi_{v,t}[g](c, \sen){\rm d}v,
  \label{Seveqn}
\end{equation}
that is equation \eqref{eveq1} for the case $x=\sen$.

\medskip

We now show that \eqref{eveq1} admits a unique solution. Suppose that we have two solutions bounded in time, $\psi^{(1)}$ and $\psi^{(2)}$, to \eqref{eveq1}. Then, denoting $\chi _{u,t}[g]= \psi^{(1)}_{u,t}[g] - \psi^{(2)}_{u,t}[g]$ and $\bar\chi_{u,t}[g] = \sup_{(c, x)}|\chi_{u,t}[g](c, x)|$, due to Assumption~\ref{as:ratebound}, we have
\begin{align}
  \big|\chi_{u,t}[g](c, x)\big| &= \left|\mathbf{1}_{\{x=\nsen\}}\int_u^t  b_\nsen(v, c)\left[\mathcal{F}[\chi_{v,t}[g]](c,\nsen) - \chi_{v,t}[g](c,\nsen)\right] {\rm d}v \right. \notag\\
&+ \left.\mathbf{1}_{\{x=\nsen\}} \int_u^t s_\nsen(v, c)\left[\chi_{v,t}[g](c,\sen) - \chi_{v,t}[g](c,\nsen)\right]- \int_u^t d_x(v, c)\chi_{v,t}[g](c,x){\rm d}v \right| \notag \\
&\le (3\|b_{\nsen}\|_\infty + 2\|s_{\nsen}\|_\infty + \|d_{\nsen}\|_\infty + \|d_{\sen}\|_\infty)\int_u^t \bar\chi_{v,t}[g]\,{\rm d}v.\notag
\end{align}
Fixing $t$ and setting $f(r)=\bar\chi_{t-r,t}[g]$, we obtain, for all $r\in [0,t]$
\begin{align*}
f(r) &\leq (3\|b_{\nsen}\|_\infty + 2\|s_{\nsen}\|_\infty + \|d_{\nsen}\|_\infty + \|d_{\sen}\|_\infty)\int_{t-r}^t \bar\chi_{v,t}[g]\,{\rm d}v\\
              &= (3\|b_{\nsen}\|_\infty + 2\|s_{\nsen}\|_\infty + \|d_{\nsen}\|_\infty + \|d_{\sen}\|_\infty)\int_{0}^r \bar\chi_{t-w,t}[g]\,{\rm d}w
 \\
 	&=(3\|b_{\nsen}\|_\infty + 2 \|s_{\nsen}\|_\infty + \|d_{\nsen}\|_\infty + \|d_{\sen}\|_\infty)\int_{0}^r f(w)\,{\rm d}w.
\end{align*}
Uniqueness now follows easily from Gr\"onwall's inequality and from $f(0)=\bar \chi_{t,t}[g]=0$.
\end{proof}

\subsection{Many-to-one}\label{sec:manytoone}
We now consider a many-to-one representation for the semigroup $(\psi_{u,t})_{t \ge u\ge 0}$. The many-to-one formula offers another representation for the first moment of the system of cells via a weighted jump process. The motivation for considering this second representation is two-fold. Firstly, as we shall see in the next section, it more easily allows us to characterise the long-term behaviour of the branching process: indeed, considering the long-term behaviour of a single (weighted) trajectory is much more tractable than that of the entire system of cells. Secondly, we will later simulate our model using interacting particle systems that are based on multiple copies of single trajectories.

{To this end,} consider the process $(\mathcal{C}_t, \mathcal{X}_t)_{t \ge 0}$ that evolves as a pure jump process in $E\cup \{\partial\}$, where $(\mathcal{C}_t, \mathcal{X}_t)=\partial$ means that the cell is removed from the system and does not evolve further (in particular, the point $\partial$ is absorbing). The dynamics of this process are as follows.

\begin{itemize}
    \item When in configuration $(c,\nsen)= ((m_j, n_j)_{j = 1}^K, \nsen) \in E_{\nsen}$, the process
\begin{itemize}
\item  jumps with rate $2b_\nsen(t, c)$, at time $t$, to $(c',x')$ with
\begin{align*}
c'& =(m_j -  \overhang B_j + \Lambda_{m_j -  \overhang B_j,j}, n_j - \overhang(1-B_j) + \Lambda'_{n_j- \overhang(1-B_j),j})_{j = 1}^K,\\
x' & = 
\begin{cases}
\nsen & \text{ if } \min c' \geq L_{\min},\\
\sen & \text{ if } \min c' < L_{\min};
\end{cases}
\end{align*}

    \item switches with rate $s_\nsen(t,c)$ from $(c,\nsen)$ to $(c,\sen)$;
    \item switches with rate $d_{\nsen}(t,c)$ from $(c,\nsen)$ to $\partial$. 
\end{itemize}
\item The process in a configuration  $(c, \sen)$  switches to $\partial$ with rate $d_{\sen}(t,c)$.
\end{itemize}
Note that for active cells, when a jump occurs at rate $2b_\nsen$, the cell jumps from $\mathcal{C}_{t-}$ to the cell $c^1$ defined as in~\eqref{branchop}. Note that as the law of $c^1$ and $c^2$ are the same, it is equivalent to choose the cell $c^1$ with probability $1/2$ and the cell $c^2$ with probability $1/2$. Then, if $\min c'\geq L_{min}$, the cell remains active ($\mathcal{X}_t = \nsen$), and otherwise the cell becomes inactive ($\mathcal X_t=\sen$).

\medskip

Let $\mathbf{P}_{u,(c, x)}$ denote the law of this process when started from a single cell with configuration $(c, x)$ at time $u$, with corresponding expectation operator $\mathbf{E}_{u,(c, x)}$. Similarly to the proof of Proposition~\ref{expsemilem}, it follows that the linear semigroup $(\varphi_{u,t})_{t \ge u\ge 0}$ associated to $(\mathcal{C}_t, \mathcal{X}_t)_{t \ge 0}$ satisfies
\begin{align}
\varphi_{u,t}[g](c,x) &\coloneqq \mathbf{E}_{u,(c,x)}\left[g(\mathcal{C}_t, \mathcal{X}_t)\mathbf{1}_{t<\tau} \right]\notag\\
 &= g(c,x) + \mathbf{1}_{x=\nsen} \int_u^t \,2b_\nsen(v, c)\left[\frac12{\mathcal{F}}[\varphi_{v,t}[g]](c,\nsen) - \varphi_{v,t}[g](c,\nsen)\right]{\rm d}v\notag\\
&\qquad +\mathbf{1}_{x=\nsen}\int_u^t  \,s_\nsen(v, c)\left[\varphi_{v,t}[g](c,\sen) - \varphi_{v,t}[g](c,\nsen)\right]{\rm d}v\notag\\
&\qquad -\int_u^t  \,d_x(v, c) \varphi_{v,t}[g](c,x){\rm d}v,
\label{onecell}
\end{align}
where $(c, x) \in E$, $g: E \to \mathbb{R}$ is a measurable, bounded function, $0 \le u \le t$ and $\tau$ denotes the hitting time of $\partial$ by the process $(\mathcal{C},\mathcal{X})$.

\medskip

The above semigroup describes the average behaviour of a typical particle in the branching process, $X$. However, this semigroup does not take into account mass creation and loss, as in the branching process. The following many-to-one formula shows one how to deal with this.

\smallskip

\begin{lem}
\label{lem:manytoone}
Under the assumptions of Proposition~\ref{expsemilem}, the semigroup, defined for all bounded measurable functions $g:E\to\mathbb R$, initial times $u$, initial cell configurations $(c,x) \in E$ and times $t \ge u$ by
\begin{equation}
\phi_{u,t}[g](c,x) \coloneqq \mathbf{E}_{u,(c,x)}\left[\exp\left({\int_u^t \mathbf{1}_{\{\mathcal{X}_v = \nsen\}}b_\nsen(v, \mathcal{C}_v) {\rm d}v}\right)g(\mathcal{C}_t, \mathcal{X}_t)\mathbf{1}_{t<\tau}\right],
\label{m21}
\end{equation}
 also solves equation~\eqref{eveq1}, and hence $\psi_{u,t} = \phi_{u,t}$ for each $t \ge u\ge 0$.
\end{lem}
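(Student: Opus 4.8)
The plan is to verify that the Feynman--Kac type semigroup $\phi_{u,t}$ defined in~\eqref{m21} satisfies the same evolution equation~\eqref{eveq1} as $\psi_{u,t}$, and then invoke the uniqueness statement of Proposition~\ref{expsemilem} to conclude $\psi_{u,t}=\phi_{u,t}$. So the proof has three ingredients: (i) $\phi$ is bounded in time; (ii) $\phi$ solves~\eqref{eveq1}; (iii) appeal to uniqueness.

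For (i), Assumption~\ref{as:ratebound} gives $\|b_\nsen\|_\infty<\infty$, so the exponential weight in~\eqref{m21} is bounded by $\exp\left((t-u)\|b_\nsen\|_\infty\right)$, whence $\|\phi_{u,t}[g]\|_\infty\le \exp\left((t-u)\|b_\nsen\|_\infty\right)\|g\|_\infty<\infty$ for every bounded measurable $g$ and $0\le u\le t$; this is exactly the ``bounded in time'' property.

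For (ii), I would first observe that $\phi$ is obtained from the semigroup $\varphi$ of~\eqref{onecell} by the standard trick of adding the weight to the state: introduce the additive functional $A_t=\int_u^t \mathbf{1}_{\{\mathcal{X}_v=\nsen\}}b_\nsen(v,\mathcal{C}_v)\,{\rm d}v$ and run the first-event decomposition on $(\mathcal{C},\mathcal{X})$ exactly as in the proof of Proposition~\ref{expsemilem}. Conditioning on the first jump time of the typical cell (at combined rate $2b_\nsen+s_\nsen+d_\nsen$ in the active state, $d_\sen$ in the non-active state), and writing the contribution of the exponential weight accumulated before that first jump, I get an integral equation of the form
\begin{align*}
\phi_{u,t}[g](c,\nsen) &= g(c,\nsen)\,{\rm e}^{-\int_u^t (2b_\nsen+s_\nsen+d_\nsen)(v,c)\,{\rm d}v}\,{\rm e}^{\int_u^t b_\nsen(v,c)\,{\rm d}v}\\
&\quad + \int_u^t 2b_\nsen(v,c)\,{\rm e}^{-\int_u^v (2b_\nsen+s_\nsen+d_\nsen)(w,c)\,{\rm d}w}\,{\rm e}^{\int_u^v b_\nsen(w,c)\,{\rm d}w}\,\tfrac12\mathcal{F}[\phi_{v,t}[g]](c,\nsen)\,{\rm d}v\\
&\quad + \int_u^t s_\nsen(v,c)\,{\rm e}^{-\int_u^v (2b_\nsen+s_\nsen+d_\nsen)(w,c)\,{\rm d}w}\,{\rm e}^{\int_u^v b_\nsen(w,c)\,{\rm d}w}\,\phi_{v,t}[g](c,\sen)\,{\rm d}v,
\end{align*}
where the crucial point is that the extra weight $\exp\left(\int_u^v b_\nsen(w,c)\,{\rm d}w\right)$ partially cancels the killing term, turning the effective exponent from $-\int (2b_\nsen+s_\nsen+d_\nsen)$ into $-\int (b_\nsen+s_\nsen+d_\nsen)$, and the factor $2b_\nsen\cdot\tfrac12=b_\nsen$ in front of $\mathcal{F}$. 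The resulting equation is precisely~\eqref{NSeveqn} with $\psi$ replaced by $\phi$. Likewise, for $x=\sen$ there is no weight (the weight is only accumulated when $\mathcal{X}_v=\nsen$), so $\phi_{u,t}[g](c,\sen)$ satisfies~\eqref{Seveqn}. Differentiating these (or applying Theorem~2.1 of~\cite{NTEbook} as in the cited proof) yields~\eqref{eveq1} for $\phi$. Finally, for (iii), since both $\psi$ and $\phi$ are bounded in time and solve~\eqref{eveq1}, the uniqueness part of Proposition~\ref{expsemilem} forces $\psi_{u,t}=\phi_{u,t}$ for all $t\ge u\ge 0$.

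The main obstacle is bookkeeping in step (ii): one must carefully track how the exponential weight splits across the first-jump decomposition — the portion accrued strictly before the first event multiplies the survival exponential, and the portion accrued after the first event is carried inside $\phi_{v,t}[g]$ by the Markov property — and check that the rate $2b_\nsen$ (rather than $b_\nsen$) in the dynamics of $(\mathcal{C},\mathcal{X})$ combines with the factor $1/2$ in front of $\mathcal{F}$ and with the weight $\exp\left(\int b_\nsen\right)$ to reproduce exactly the coefficients in~\eqref{NSeveqn}. Once this cancellation is seen clearly, the rest is routine. Alternatively, one can avoid rederiving the integral equation from scratch by treating $\phi$ as a Feynman--Kac perturbation of $\varphi$ and using a Duhamel/Dyson-series argument, but the direct first-event computation is cleanest here and matches the style of the proof of Proposition~\ref{expsemilem}.
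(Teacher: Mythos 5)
Your proposal is correct and follows exactly the first of the two routes the paper indicates (the paper leaves the proof as an exercise, pointing to ``similar methods to those used in Proposition~\ref{expsemilem}''): a first-event decomposition of the weighted one-particle semigroup, verification that the Feynman--Kac weight cancels the doubled jump rate and the factor $\tfrac12$ in front of $\mathcal{F}$ so as to recover \eqref{NSeveqn} and \eqref{Seveqn}, and then the uniqueness part of Proposition~\ref{expsemilem}. The bookkeeping you describe is right, and you correctly check the ``bounded in time'' hypothesis needed to invoke uniqueness.
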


\smallskip

\begin{proof}[Sketch proof]
    The method behind the proof is almost identical to that of Proposition \ref{expsemilem} and so we outline the steps and leave the details as an exercise for the reader. 
    
    Conditioning on the first event time, as outlined in the dynamics described above, yields an equation similar to that of \eqref{onecell} except that the first term will be weighted by ${\rm e}^{\int_u^t \mathbf{1}_{\{x = \nsen\}}b_\nsen(s, c) {\rm d}s}$, and each of the integrands will be weighted by ${\rm e}^{\int_v^t \mathbf{1}_{\{x = \nsen\}}b_\nsen(s, c) {\rm d}s}$. The conclusion now follows from an application of Theorem \ref{thm:Dynkin}.
\end{proof}

\smallskip

The main idea behind the many-to-one lemma is that since we are only interested in the average behaviour of the branching process, rather than the pathwise behaviour, we can interpret the first moment of the whole branching particle system as the average behaviour of a typical trajectory in the whole particle system, weighted by the number of particles we expect to see in the system. In this case, the process $(\mathcal C_t, \mathcal X_t)_{t \ge 0}$ follows the dynamics of a `typical trajectory' and the expectation of the exponential term in \eqref{m21} corresponds to the expected number of particles in the branching process at time $t$.

\section{Long-term behaviour}\label{sec:QSD}
This section is devoted to studying the asymptotic stability of the telomere length profile, as stated in Theorem~\ref{thm:mainresult} below. For simplicity, we only consider the time homogeneous dynamics (hence replacing $b_{\nsen}(t,c)$ by $b_{\nsen}(c)$, $d_{\nsen}(t,c)$ by $d_{\nsen}(c)$, $\psi_{s,t}$ by $\psi_{t-s}$ and so on). This will allow us to make a connection to the theory of quasi-stationary distributions. We refer the reader to the book of~\cite{DelMoral2004} and the articles of \cite{DelMoralMiclo2002,DelMoral2013,BansayeCloezEtAl2019,ChampagnatVillemonais2018,velleret2019exponential} for time inhomogeneous versions of this theory.

In order to state our main result, we need some additional assumptions.
We first make a technical assumption (Assumption~\ref{as:irreducibility}) on the telomere lengthening caused by telomerase, which will ensure that the process is irreducible in $E_\nsen$ (see Lemma~\ref{lem:irreducibility} below). 
The second assumption (Assumption~\ref{as:lambda0pos}) then imposes that the process is supercritical (in the usual sense, made precise below). Finally we make assumptions which guarantee quasi-compactness (Assumptions~\ref{as:moment} and~\ref{as:qgeom}). We also remind the reader that the table of notation included in Appendix \ref{sec:notation} may be of use. 

\begin{ass}\label{as:irreducibility}
    We assume that 
    \begin{enumerate}
        \item there exists $b_{tel}>\overhang$ in $\mathbb N$ such 
        that $\mu_f\{b_{tel}\}>0$ and $\mu_f\{b_{tel}-1\}>0$,
        \item $b_\nsen(c)>0$ for all $(c,\nsen)\in E_{\nsen}$,
        \item $0<q(i)<1$ for all $i\geq L_{\min}$.
    \end{enumerate} 
\end{ass}

We will soon see that under Assumptions~\ref{as:ratebound} and~\ref{as:irreducibility}, the process is irreducible on $E_\nsen$ (see Lemma~\ref{lem:irreducibility}) and, by construction, its complement is absorbing. Hence we can define 
\begin{align}
\label{eq.def.lambda0}
\lambda_0=\inf\left\{\lambda\in\mathbb R,\ \text{such that}\ \liminf_{t\to \infty} e^{-\lambda t} \psi_{t}[\mathbf 1_{F}](c,\nsen)<+\infty \right\},
\end{align}
which does not depend on $(c,\nsen)\in E_\nsen$ nor on the (arbitrary) non-empty finite set $F\subset E_\nsen$. The informal interpretation of $\lambda_0$ is that it measures the asymptotic exponential growth of the expected number of active cells in the population. We say that the process is {\it supercritical} when $\lambda_0 > 0$, {\it subcritical} when $\lambda_0 < 0$ and {\it critical} when $\lambda_0 = 0$. The value $\lambda_0$ is sometimes referred to as the Malthusian parameter. 

In the rest of this section, we focus on the supercritical case, which corresponds to exponential growth of the average number of active cells (such as expanding cancerous tumour \citep{blasco2005telomeres} or germinal cells \cite[Fig.~1]{HiyamaHiyama2007}).

\begin{ass}
    \label{as:lambda0pos}
    We have $\lambda_0>0$.
\end{ass}

Next, we introduce an assumption on the moments of $\mu_f$, the law of the length added to telomeres when telomerase acts.

\begin{ass}
    \label{as:moment}     
    We assume that there exists $\alpha_0>\frac{1}{\overhang}\ln\frac{\|b_{\nsen}\|_\infty}{\lambda_0}$ such that  $\mu_{f}$ admits an exponential moment of order $\alpha_0>0$, i.e.
    \begin{align*}
    \sum_{n\geq 0} \mu_f\{n\}\, \exp(\alpha_0 n)<+\infty.
    \end{align*}
\end{ass}

We conclude with an assumption on the probability that telomerase acts on a telomere, depending on its length.
\begin{ass}\label{as:qgeom}
    We assume that the probability of telomerase activity, $q(i)$, decreases to $0$ when $i\to+\infty$.  
\end{ass}

Note that Assumption~\ref{as:qgeom} is satisfied, in particular, 
when $q(i)$ decreases geometrically fast in $i$, as suggested in  the telomere length regulation model~\citep{Greider2016}. Indeed, the replication fork model described therein posits that telomerase travels with the lagging strand replication machinery, requiring it to stay bound to the fork through nucleosomes and telomere proteins, either of which can cause
dissociation of telomerase, until reaching the chromosome terminus for telomere extension. By linking telomerase activity to fork progression subject to the renewal of succeeding non-dissociation events along the telomere, this model suggests that the activity of telomerase occurs with a probability that decreases geometrically fast in the length of the telomere.

\medskip

We are now ready to state our main result.

\begin{thm}
	\label{thm:mainresult}
	  Suppose Assumptions~\ref{as:ratebound},~\ref{as:irreducibility},~\ref{as:lambda0pos},~\ref{as:moment} and~\ref{as:qgeom} hold. Then there exists a function $\eta:E\to\mathbb R$, positive on $E_\nsen$ and vanishing on $E_\sen$, a probability measure $\nu$ on $E$, and a function $V:E\to[1,\infty)$  such that, for all $g:E\to\mathbb R$ satisfying $|g|\leq V$, we have
	\begin{align*}
	\left|e^{-\lambda_0 t}\psi_t[g](c,x)-\eta(c,x)\nu[g]\right|\leq C e^{-\gamma t} V(c,x), \quad t\geq 0,\ (c,x)\in E,
	\end{align*}
	 where $C,\gamma$ are positive constants. In addition, one can choose $V$ such that, for some $k\geq 1,\alpha>0$,
	\begin{align*}
	V(c,\sen)=1\text{ and }V(c,\nsen)=\exp\left[\alpha\sum_{j=1}^K\big((m_j-k+\overhang)_+ + (n_j-k+\overhang)_+\big)\right].
	\end{align*}
\end{thm}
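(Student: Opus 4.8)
The plan is to reduce the statement to the single--cell penalised semigroup and then invoke a general convergence theorem for non--conservative semigroups, in the spirit of~\cite{ChampagnatVillemonais2018,BansayeCloezEtAl2019,velleret2019exponential}. By the many--to--one formula (Lemma~\ref{lem:manytoone}) we have $\psi_t=\phi_t$, so it suffices to study
\[
\phi_t[g](c,x)=\mathbf{E}_{(c,x)}\Big[\exp\Big(\int_0^t\mathbf{1}_{\{\mathcal{X}_v=\nsen\}}\,b_\nsen(\mathcal{C}_v)\,\dd v\Big)\,g(\mathcal{C}_t,\mathcal{X}_t)\,\mathbf{1}_{t<\tau}\Big].
\]
I would establish, for this semigroup, three ingredients: (i) irreducibility on $E_\nsen$; (ii) a Foster--Lyapunov--type estimate involving the Lyapunov function $V$ of the announced form; and (iii) a local Doeblin (minorisation) estimate on the sublevel sets of $V$. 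These are precisely the inputs of the general theorem, whose conclusion is that $\lambda_0$ from~\eqref{eq.def.lambda0} is a genuine dominant eigenvalue, together with the existence of the right eigenfunction $\eta$, the left eigenmeasure (quasi--stationary distribution) $\nu$, and the exponential bound in the $V$--weighted norm. That $\eta$ vanishes on $E_\sen$ is immediate: from a state $(c,\sen)$ the trajectory of $(\mathcal{C},\mathcal{X})$ stays in $E_\sen$, accrues no exponential weight, and is killed at rate $d_\sen$, so $e^{-\lambda_0 t}\phi_t[g](c,\sen)\to 0$ because $\lambda_0>0$; positivity of $\eta$ on $E_\nsen$ will follow from (i).

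\textbf{Step 1: irreducibility.}
First I would prove Lemma~\ref{lem:irreducibility}, i.e. that from any $(c,\nsen)\in E_\nsen$ the process $(\mathcal{C},\mathcal{X})$ can reach any prescribed $(c',\nsen)\in E_\nsen$, and hence any finite subset of $E_\nsen$, within a bounded time with positive probability. Since $b_\nsen>0$ on $E_\nsen$ (Assumption~\ref{as:irreducibility}(2)) a division may occur at any moment; at a division a given telomere end either keeps its length or loses $\overhang$, and is then independently increased, with probability $q(\ell)\in(0,1)$ (Assumption~\ref{as:irreducibility}(3)), by an amount drawn from $\mu_f$, which charges the two consecutive --- hence coprime --- integers $b_{tel}$ and $b_{tel}-1$, both strictly larger than $\overhang$ (Assumption~\ref{as:irreducibility}(1)). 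Alternating divisions (which subtract multiples of $\overhang$) with telomerase increments of sizes $b_{tel}$ and $b_{tel}-1$, one drives each telomere end, one at a time, to any target value $\ge L_{\min}$ while keeping the minimum above $L_{\min}$ throughout, so the cell never deactivates; concatenating these finitely many positive--probability moves yields the claim. In particular $\lambda_0$ in~\eqref{eq.def.lambda0} is well defined and independent of the initial state and of $F$.

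\textbf{Step 2: Lyapunov estimate.}
This is the analytic core. Fix $\alpha\in\big(\tfrac{1}{\overhang}\ln\tfrac{\|b_\nsen\|_\infty}{\lambda_0},\,\alpha_0\big)$, a non--empty interval by Assumption~\ref{as:moment}, write $M_\alpha:=\sum_n\mu_f\{n\}e^{\alpha n}<\infty$, and set, with $k\ge1$ to be chosen large,
\[
W(c)=\sum_{j=1}^K\big((m_j-k+\overhang)_+ + (n_j-k+\overhang)_+\big),\qquad V(c,\nsen)=e^{\alpha W(c)},\qquad V(c,\sen)=1 .
\]
Letting $\mathcal{L}$ be the generator of $(\mathcal{C},\mathcal{X})$ read off from~\eqref{onecell} and $\mathcal{L}^{b}:=\mathcal{L}+b_\nsen\mathbf{1}_{E_\nsen}$ the penalised generator of~\eqref{m21}, one computes
\[
\mathcal{L}^{b}V(c,\nsen)=2b_\nsen(c)\,\mathbf{E}\big[V(c^1,x^1)\big]-b_\nsen(c)\,V(c,\nsen)+s_\nsen(c)\big(1-V(c,\nsen)\big)-d_\nsen(c)\,V(c,\nsen),
\]
with $c^1,x^1$ as in~\eqref{branchop}. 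The key structural observation is that at a division each chromosome loses \emph{exactly} $\overhang$ in total from its two ends, irrespective of the Bernoulli allocation, so that, neglecting the clipping $(\cdot)_+$ (which only decreases $W$), $W(c^1)=W(c)-\overhang K+\sum_j(\Lambda_{\cdot,j}+\Lambda'_{\cdot,j})$, whence $\mathbf{E}[V(c^1,x^1)]\le e^{-\alpha\overhang K}V(c,\nsen)\prod_j\mathbf{E}[e^{\alpha\Lambda_{\cdot,j}}]\,\mathbf{E}[e^{\alpha\Lambda'_{\cdot,j}}]$; since $\mathbf{E}[e^{\alpha\Lambda_\ell}]=1+q(\ell)(M_\alpha-1)\to1$ as $\ell\to\infty$ by Assumption~\ref{as:qgeom}, on the set of cells all of whose telomere ends exceed $k$ one obtains $\mathbf{E}[V(c^1,x^1)]\le e^{-\alpha\overhang K}(1+\varepsilon_k)\,V(c,\nsen)$ with $\varepsilon_k\to0$. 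Combined with $s_\nsen(1-V)-d_\nsen V\le 0$, this already gives a good pointwise bound on $\mathcal{L}^{b}V$ far out in that region; extending it to a bona fide drift/escape inequality of the form $\mathcal{L}^{b}V\le\lambda\,V+C\,\mathbf{1}_{K_0}$, with $\lambda$ (arbitrarily close to, or equal to, $\lambda_0$) and $K_0$ a sublevel set of $V$, additionally requires analysing the region where some telomere ends are moderate while others are long --- using there the immediate--deactivation indicator $\mathbf{1}_{\min c^1<L_{\min}}$, the bound $q<1$, the exponential moment $M_\alpha\le M_{\alpha_0}<\infty$, and crucially the lower bound imposed on $\alpha_0$ in Assumption~\ref{as:moment}, which is exactly what forces the decay gained from the lost overhangs to outpace the maximal weight--growth rate $\|b_\nsen\|_\infty$ relative to $\lambda_0$, so that $V$ dominates the eigenfunction $\eta$. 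A Gr\"onwall argument as in the proof of Proposition~\ref{expsemilem} also gives $\phi_t[V]<\infty$ and $\phi_t[V]\le C_t V$.

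\textbf{Step 3: minorisation, conclusion, and the main obstacle.}
On each sublevel set $\{V\le R\}$, which consists of active cells all of whose telomere ends are bounded by a constant depending on $R$, together with $E_\sen$, the irreducibility of Step~1 combined with the fact that only finitely many transitions are relevant and each has uniformly positive probability and uniformly lower--bounded holding time yields a Doeblin minorisation: there are $t_0>0$, $c_0>0$ and a probability measure $\nu_0$ on a finite subset of $E_\nsen$ with $\phi_{t_0}[\,\cdot\,](c,x)\ge c_0\,\nu_0[\,\cdot\,]$ for $(c,x)\in\{V\le R\}$. Steps~1--3 match the hypotheses of the general convergence theorem for non--conservative semigroups of~\cite{BansayeCloezEtAl2019} (see also~\cite{ChampagnatVillemonais2018,velleret2019exponential}), and its conclusion is precisely the asserted estimate, with $\gamma>0$ depending on $t_0$, $c_0$ and the constants of Step~2. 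I expect Step~2 to be the main obstacle: turning the clean ``all telomere ends long'' computation into a \emph{uniform} Lyapunov inequality holding off a sublevel set of $V$ needs a somewhat delicate case analysis of the boundary region --- where the heavy but exponentially integrable lengthening increments, the clipping, the possibility of immediate deactivation, and the fact that $b_\nsen$ is only bounded above (not below) all interact --- together with a careful bookkeeping of constants so that the threshold on $\alpha_0$ is used exactly as in Assumption~\ref{as:moment}. Verifying the uniform minorisation on sublevel sets, and matching the quantitative hypotheses of the chosen theorem verbatim, are the remaining, comparatively routine, points.
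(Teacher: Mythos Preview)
Your architecture (irreducibility $\Rightarrow$ Lyapunov drift $\Rightarrow$ local minorisation $\Rightarrow$ Harris--type theorem) is the right one, and Step~1 is essentially the paper's Lemma~\ref{lem:irreducibility}. However there are two genuine gaps, one in Step~2 and one in Step~3.

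\textbf{The Lyapunov step.} Your identity $W(c^1)=W(c)-\omega K+\sum_j(\Lambda_{\cdot,j}+\Lambda'_{\cdot,j})$ holds only when \emph{every} telomere end exceeds $k-\omega$, so that no clipping occurs. The parenthetical ``clipping only decreases $W$'' is in the wrong direction: for a telomere with $m_j<k-\omega$ that is shortened, the clipped contribution stays at $0$ while the unclipped one would decrease, so clipping makes $W(c^1)-W(c)$ \emph{larger}, not smaller. Consequently your bound $\mathbf E[V(c^1,x^1)]\le e^{-\alpha\omega K}(1+\varepsilon_k)V(c,\nsen)$ is valid only on $\{\min c\ge k\}$, whose complement is infinite (take one telomere near $L_{\min}$ and another arbitrarily long: $V$ is unbounded there). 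The tools you list for the ``mixed'' region (deactivation indicator, $q<1$, moments of $\mu_f$) do not produce the missing contraction. The paper's key manoeuvre is different: it works on $\{\max c\ge k\}$, whose complement \emph{is} finite, and introduces the high--probability event
\[
A_k(c)=\{\Lambda_{\cdot,j}=0\text{ for all ends }>k/2\}\cap\{\text{all }\Lambda<k/2-\omega\},
\]
whose probability $p_k(c)\to 1$ uniformly by Assumption~\ref{as:qgeom}. On $A_k(c)$ one has $V_k(c')\le V_k(c)$ always, and $V_k(c')\le e^{-\alpha\omega}V_k(c)$ with probability $\ge 1/2$ (namely when the Bernoulli hits the side achieving $\max c\ge k$). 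This yields only the factor $e^{-\alpha\omega}$, not $e^{-\alpha\omega K}$, which is exactly why Assumption~\ref{as:moment} requires $\alpha_0>\tfrac{1}{\omega}\ln\tfrac{\|b_\nsen\|_\infty}{\lambda_0}$ rather than the weaker $\tfrac{1}{\omega K}$ bound your computation would suggest. The complementary event $A_k(c)^c$ is handled by H\"older and the exponential moment.

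\textbf{The reducible state space.} Your Step~3 claims a Doeblin minorisation $\phi_{t_0}[\cdot](c,x)\ge c_0\,\nu_0[\cdot]$ for all $(c,x)\in\{V\le R\}$, with $\nu_0$ supported in $E_\nsen$. This is impossible: since $V\equiv 1$ on $E_\sen$, every sublevel set contains all of $E_\sen$, and from any $(c,\sen)$ the process can only be killed and never re--enters $E_\nsen$, so $\phi_{t_0}[g](c,\sen)=e^{-d_\sen(c)t_0}g(c,\sen)=0$ for $g$ supported in $E_\nsen$. The state space is genuinely reducible, and the irreducible Harris--type results you cite do not apply on $E$ as a whole. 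The paper therefore proceeds in two stages: first it establishes the conclusion on $E_\nsen$ alone (Proposition~\ref{prop:quasicomp1}), where irreducibility holds, via Theorem~5.1 of~\cite{ChampagnatVillemonais2017}; then it upgrades to all of $E$ by invoking Theorem~3.1 of~\cite{CV2021reducible}, a result tailored to the absorbing decomposition $E=E_\nsen\cup E_\sen$, passing through the discrete skeleton and back to continuous time via the quasi--stationarity of $\nu$. Your observation that $\eta\equiv 0$ on $E_\sen$ because $\lambda_0>0$ is correct and is implicitly what makes that second stage work.
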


\medskip

We prove this Theorem in three steps. The first takes the form of a Lemma that shows that the process is irreducible on $E_\nsen$. The second gives the asymptotic behavior for the process restricted to the active particles (Proposition~\ref{prop:quasicomp1}). Finally the proof of Theorem~\ref{thm:mainresult} follows from Theorem~3.1 of~\cite{CV2021reducible}.

\medskip

\begin{lem}
	\label{lem:irreducibility}
	Under Assumptions~\ref{as:ratebound} and~\ref{as:irreducibility}, the set $E_\nsen$
	is irreducible for the process $(\mathcal C_t,\mathcal X_t)_{t\geq 0}$ defined in Section~\ref{sec:manytoone}.
\end{lem}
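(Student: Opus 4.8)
The plan is to prove irreducibility in the usual sense for the absorbed jump process $(\mathcal C_t,\mathcal X_t)$ of Section~\ref{sec:manytoone}: for every pair $(c,\nsen),(c',\nsen)\in E_\nsen$, the process started from $(c,\nsen)$ visits $(c',\nsen)$ with positive probability. Since from an active configuration the only transitions that stay in $E_\nsen$ are branching events producing active daughters, and these occur at positive rate because $b_\nsen>0$ on $E_\nsen$ (Assumption~\ref{as:irreducibility}(2)), while every other transition leaves $E_\nsen$ for $E_\sen$ or $\partial$ and cannot return, it suffices to exhibit a finite chain of branching transitions, all inside $E_\nsen$ and each of positive probability, joining the two configurations. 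Recall that at a branching event the $K$ chromosomes are updated independently: for chromosome $j$ an independent fair coin $B_j$ decides which of its two ends is shortened by $\overhang$, and then each resulting telomere of length $\ell\ge L_{\min}$ is kept unchanged with probability $1-q(\ell)\in(0,1)$ or incremented by an independent $\mu_f$-draw with probability $q(\ell)\in(0,1)$, the increments $b_{tel}$ and $b_{tel}-1$ (with $b_{tel}>\overhang$) each having positive weight (Assumption~\ref{as:irreducibility}(1),(3); the positivity of $q$ is invoked at each telomere length arising along the path). Hence, on a single chromosome the following moves have positive probability whenever their outputs keep every telomere $\ge L_{\min}$: shorten one chosen end by $\overhang$; replace one chosen end $\ell$ by $\ell-\overhang+b_{tel}$ or $\ell-\overhang+b_{tel}-1$, i.e.\ change it by the non‑negative amounts $b_{tel}-\overhang\ge1$ or $b_{tel}-1-\overhang\ge0$; or, when the partner end can absorb the $\overhang$-shift, lengthen one end $\ell$ to $\ell+b_{tel}$ or $\ell+b_{tel}-1$.

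The chain is built through a bounded ``small'' configuration, in three stages. In the first stage we apply the shortening move repeatedly, choosing for each chromosome which end to shorten, to drive every telomere down into the window $[L_{\min},L_{\min}+\overhang)$, always keeping at least one end of each chromosome in the safe region $[L_{\min}+\overhang,\infty)$ (boosting a short end out of the window with the lengthening move before shortening its partner); this stays in $E_\nsen$ and terminates in finitely many steps. In the second stage we rebuild $c'$ from a small configuration, one chromosome at a time. The arithmetic core is that $b_{tel}-\overhang$ and $b_{tel}-1-\overhang$ are consecutive integers, so by the Sylvester--Frobenius (``Chicken McNugget'') theorem every sufficiently large integer is a non-negative integer combination of them, with arbitrary prescribed residue modulo $\overhang$; combining such lengthenings with $\overhang$-shortenings, one checks that from any small value an arbitrary target length $\ge L_{\min}$ is reached on a single coordinate by a positive-probability chain keeping all lengths $\ge L_{\min}$: lengthen so as to overshoot a value congruent to the target modulo $\overhang$, then shorten down in steps of $\overhang$. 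Carrying this out for one end of each chromosome and then for the other (holding the already-set coordinate fixed by not elongating it) gives, for each $j$, a positive-probability single-chromosome chain from the small state to $(m_j',n_j')$.

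Since a branching event updates all $K$ chromosomes simultaneously, these $K$ chains must be given a common length. This is arranged by a padding observation: while a given end is in the safe region, one may insert a length-$p$ ``do-nothing loop'' on that chromosome — shorten the end $p$ times and re-lengthen it by a multiset of values from $\{b_{tel},b_{tel}-1\}$ with total $p\overhang$ — and this is feasible using at most $p$ elongation steps as soon as $p\,\overhang\ge b_{tel}(b_{tel}-1)$, since the interval $[\,p\overhang/b_{tel},\,p\overhang/(b_{tel}-1)\,]$ then contains an integer (the number of elongation steps used), which is $\le p$ because $\overhang\le b_{tel}-1$. Lengthening each single-chromosome chain to a sufficiently large common length $T$ and running the $j$-th padded chain on chromosome $j$ for all $j$ at once yields one positive-probability chain of branching transitions from $(c,\nsen)$ to $(c',\nsen)$ inside $E_\nsen$, which proves the Lemma.

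The main obstacle is not any single step but the clash of two constraints: each branching event simultaneously shortens one end of \emph{every} chromosome by $\overhang$, while the cell must remain active — all telomeres $\ge L_{\min}$ — at \emph{every} intermediate step. The two consecutive telomerase increments are precisely what reconciles them, giving (through the Frobenius argument) enough flexibility to hit any residue class modulo $\overhang$ and any large enough length, and the padding loops are what let the $K$ chromosomal chains be run in lockstep. A secondary point of care is the treatment of telomeres landing just below $L_{\min}+\overhang$, whose shortening would exit $E_\nsen$; this is why the construction always keeps one end of each chromosome in the safe region and boosts a short end out of the window before acting on its partner.
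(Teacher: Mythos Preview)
Your argument is correct and follows the same plan as the paper: reduce to single-chromosome reachability via the three one-sided moves $-\omega$, $+(b_{tel}-\omega)$ and $+(b_{tel}-\omega-1)$ (the other end held fixed by choosing $B_j$ and declining telomerase), then synchronise the $K$ chromosomes. Where you differ is in the synchronisation step. The paper observes that the auxiliary single-chromosome chain admits return loops of the two coprime lengths $b_{tel}$ and $b_{tel}-1$, hence is aperiodic as well as irreducible; any target is therefore reachable in \emph{every} sufficiently large number of steps, and a common $n_0$ for all $K$ chromosomes exists automatically. Your explicit Sylvester--Frobenius padding is a hands-on unwinding of this same aperiodicity fact, and your careful tracking of the $L_{\min}$ constraint along the way is arguably an improvement on the paper's proof, which is less explicit about whether its auxiliary chain $Z^j$ stays in $E_\nsen$ at intermediate steps.

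One presentational wrinkle: your stage~1 is phrased as a multi-chromosome procedure whose endpoint has \emph{every} telomere in the window $[L_{\min},L_{\min}+\omega)$, yet you also require at least one end of each chromosome to remain in the safe region $[L_{\min}+\omega,\infty)$ throughout---these cannot both hold at the final step, and a chromosome with both ends already in the window cannot idle through the remaining branching events needed to bring the others down. This is not a real obstruction, since your final paragraph in fact pads single-chromosome chains running all the way from $(m_j,n_j)$ to $(m_j',n_j')$ and runs them in lockstep; but then the intermediate ``small configuration'' is unnecessary, and dropping it (as the paper does) cleans up the staging.
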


\begin{proof}
	Let $(c, x) = ((m_j, n_j)_{j = 1}^K, \nsen)\in E_\nsen$. Let us denote by $\tau_1<\tau_2$ the two first jump times of the process (where $\tau_2=+\infty$ if $(\mathcal C_{\tau_1}, \mathcal X_{\tau_1})=\partial$). Then, for any fixed $t_0>0$ and any $(A_1\times \cdots \times A_K)\times\{\nsen\}\in E_\nsen$, we have
	\begin{align}
	\mathbf{P}_{(c,x)}&\left((\mathcal C_{t_0},\mathcal X_{t_0})\in (A_1\times\cdots\times A_K)\times\{\nsen\}\right)\nonumber\\
	&\geq \mathbf P_{(c,x)}\left(\tau_1<t_0<\tau_2,\,(\mathcal C_{t_0},\mathcal X_{t_0})\in (A_1\times\cdots\times A_K)\times\{\nsen\}\right)\nonumber\\
	&\geq (1-e^{-t_0\,\inf (2b_\nsen+s_\nsen+d_\nsen)}) e^{-t_0\,\sup (2b_\nsen+s_\nsen+d_\nsen)}\frac{\inf 2b_{\nsen}}{\sup (2b_\nsen+s_\nsen+d_\nsen)}\nonumber\\
	&\qquad\qquad \times \,\mu_{(m_1,n_1)}(A_1)\times\cdots\times \mu_{(m_K,n_K)}(A_K),
	\label{eq:useful1}
	\end{align}
	where
	\[
	\mu_{(m_j,n_j)}(A_j)=\mathbb P\left((m_j-\overhang B_j+\Lambda_{m_j-\overhang B_j,j},n_j-\overhang(1-B_j)+\Lambda'_{n_j-\overhang(1-B_j),j})\in A_j\right).
	\]
	Note that due to the second part of Assumption~\ref{as:irreducibility}, $\inf 2b_{\nsen}>0$.
	
	We observe that, restricting to the events $\{B_j=1, \Lambda'_{n_j,j}=0\}$ (i.e., we have attrition $-\overhang$  
	on the left-hand side of the chromosome and telomerase does not act on the right-hand side) and 
	$\{B_j=0, \Lambda_{m_j,j}=0\}$, we obtain 
	\begin{align*}
	\mu_{(m_j,n_j)}(\cdot)&\geq \mathbb P(B_j=1, \Lambda'_{n_j,j}=0)\,\mathbb P\left((m_j-\overhang +\Lambda_{m_j,j},n_j)\in \cdot\right)\\
	&\qquad\qquad+\mathbb P(B_j=0, \Lambda_{m_j,j}=0) \,\mathbb P\left((m_j,n_j-\overhang +\Lambda'_{n_j,j})\in \cdot\right)\\
	&\geq C\,\left(\delta_{(m_j+b_{tel}-\overhang,n_j)}+\delta_{(m_j+b_{tel}-\overhang-1,n_j)}+\delta_{(m_j-\overhang,n_j)}\right)\\
	&\qquad\qquad+C\left(\delta_{(m_j,n_j+b_{tel}-\overhang)}+\delta_{(m_j,n_j+b_{tel}-\overhang-1)}+\delta_{(m_j,n_j-\overhang)}\right),
	\end{align*}
	where Assumption~\ref{as:irreducibility} ensures that $C$ can be chosen to be positive.
	
	Now consider the discrete time Markov process $Z^j$  evolving in $\mathbb N^2$ and with transition kernel
	\begin{align*}
	\frac{1}{6}\left(\delta_{(m_j+b_{tel}-\overhang,n_j)}+\delta_{(m_j+b_{tel}-\overhang-1,n_j)}+\delta_{(m_j-\overhang,n_j)}+\delta_{(m_j,n_j+b_{tel}-\overhang)}+\delta_{(m_j,n_j+b_{tel}-\overhang-1)}+\delta_{(m_j,n_j-\overhang)}\right).
	\end{align*}
	This process can jump with positive probability from $(m_j,n_j)$ to $(m_j+\overhang(b_{tel}-\overhang),n_j)$ in $\overhang$ steps and from $(m_j+\overhang(b_{tel}-\overhang),n_j)$ to  $(m_j,n_j)$ in $b_{tel}-\overhang$ steps, so that there is path of length $b_{tel}$ linking $(m_j,n_j)$ to itself. Similarly, there exists a path of length $b_{tel}-1$ linking $(m_j,n_j)$ to itself. This implies that the process is aperiodic. Moreover, the process can jump from $(m_j,n_j)$ to $(m_j+\overhang(b_{tel}-\overhang-1)+1,n_j)$ in $\overhang$ steps, and then come back to $(m_j+1,n_j)$ in $b_{tel}-\overhang-1$ steps. This shows that the process can reach $(m_j+1,n_j)$ from $(m_j,n_j)$. Similarly, one shows that the process can reach $(m_j-1,n_j)$, $(m_j,n_j+1)$, $(m_j,n_j-1)$ from $(m_j,n_j)$ in less than $b_{tel}$ steps. In particular, we deduce that, for all $(m_1,n_1),\ldots,(m_K,n_K)$ and $(m'_1,n'_1),\ldots,(m'_K,n'_K)$, there exists $n_0$ such that
	\begin{align*}
	\mathbb P(Z^1_{n_0}=(m'_1,n'_1),\ldots,Z^K_{n_0}=(m'_K,n'_K)\mid Z^1_{0}=(m_1,n_1),\ldots,Z^K_{0}=(m_K,n_K))>0,
	\end{align*}
	where the $Z^i$ are chosen independent.
	Using the definition of $\mu_{(m_j,n_j)}$ and the inequality~\eqref{eq:useful1}, we deduce that, for all $(c',\nsen)\in E_\nsen$,
	\begin{align*}
	\mathbf P_{(c,x)}\left(\mathcal C_{n_0\,t_0}=c',\mathcal X_{n_0\,t_0}=\nsen\right)>0.
	\end{align*}
\end{proof}

In the next proposition, we use the notation
\[
G:=\left\{c\in(\mathbb N\times\mathbb N)^K,\ \min c>L_{min}\right\}
\]
to denote the set of possible values of active cells.

\begin{prop}
    \label{prop:quasicomp1}
    Suppose Assumptions~\ref{as:ratebound},~\ref{as:irreducibility},~\ref{as:lambda0pos},~\ref{as:moment} and~\ref{as:qgeom} hold. Then there exists a positive function $\eta_\nsen:G\to\mathbb R$, a probability measure $\nu_{\nsen}$  on $G$, and a function $V_\nsen:G\to[1,\infty)$  such that, for all $g:G\to\mathbb R$ satisfying $|g|\leq V_\nsen$, 
    \begin{align*}
    \left|e^{-\lambda_0 t}\psi_t[g_0](c,\nsen)-\eta_\nsen(c)\nu_\nsen[g]\right|\leq C e^{-\gamma t} V_\nsen(c),\ \forall t\geq 0,\ \forall c\in G,
    \end{align*}
    where $g_0(c,x):=g(c)\mathbf 1_{c\in G, x=\nsen}$, and where $C,\gamma$ are positive constants. In addition, one can choose $V_\nsen$ such that, for some $k\geq 1,\alpha>0$,
    \begin{align*}
    V_\nsen(c)=\exp\left[\alpha\sum_{j=1}^K\big((m_j-k+\overhang)_+ + (n_j-k+\overhang)_+\big)\right].
    \end{align*}
\end{prop}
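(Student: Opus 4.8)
Since $g_0$ vanishes outside $G\times\{\nsen\}$, Lemma~\ref{lem:manytoone} identifies $c\mapsto\psi_t[g_0](c,\nsen)$ with the Feynman--Kac semigroup, evaluated at $g$, of the pure jump process $(\mathcal C_t,\mathcal X_t)$ of Section~\ref{sec:manytoone} killed upon deactivation or removal (equivalently, killed when it leaves $E_\nsen$), with multiplicative weight $\exp(\int_0^t b_\nsen(\mathcal C_v)\,\mathrm dv)$. Denote by $\mathcal L_\nsen$ its (weighted, sub-conservative) generator on $G$; combining \eqref{onecell} with the many-to-one weight one reads off $\mathcal L_\nsen V(c)=b_\nsen(c)(\mathcal F_\nsen[V](c)-V(c))-(s_\nsen(c)+d_\nsen(c))V(c)$, where $\mathcal F_\nsen$ is the operator $\mathcal F$ of \eqref{branchop} with the inactive ($\sen$) daughter terms removed. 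I would then apply a non-conservative ergodic theorem for irreducible semigroups possessing a Foster--Lyapunov function (as in the quasi-stationarity literature cited at the beginning of this section, on which Theorem~3.1 of~\cite{CV2021reducible} also rests), for which it suffices to verify: (i) irreducibility of $G$ for $(\mathcal C,\mathcal X)$ --- which is exactly Lemma~\ref{lem:irreducibility}; (ii) a drift inequality $\mathcal L_\nsen V_\nsen\le\theta V_\nsen+C\,\mathbf 1_{\mathcal K}$ on $G$ for some $\theta<\lambda_0$, $C<\infty$ and a finite set $\mathcal K\subset G$; and (iii) a uniform Doeblin-type minorization on $\mathcal K$.

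\textbf{The Lyapunov inequality --- the main step.}
Fix $V_\nsen$ of the stated form with $\alpha\le\alpha_0$; by Assumption~\ref{as:moment}, $\mathbb E[e^{\alpha L}]<\infty$, so $\mathcal F_\nsen[V_\nsen]$ is finite everywhere, and, the rates being bounded (Assumption~\ref{as:ratebound}), $\mathcal L_\nsen V_\nsen$ is bounded on any finite subset of $G$. It remains to bound $\mathcal F_\nsen[V_\nsen](c)/V_\nsen(c)$ for $\max c\ge k$, which I would do using the product form $V_\nsen(c)=\prod_{j=1}^K e^{\alpha(m_j-k+\overhang)_+}e^{\alpha(n_j-k+\overhang)_+}$ together with the independence across chromosomes of the Bernoulli allocations $B_j$ and of the elongations. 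Two points drive the estimate. First, a short telomere (length $<k$) only has a truncated contribution $>1$ if an elongation carries it above $k$; the probability of an elongation at a length close to $k$ is small (Assumption~\ref{as:qgeom} with $k$ large), while an elongation exceeding $k$ minus the current length has exponentially small weight (Assumption~\ref{as:moment} with $k$ large), so each short end contributes a factor $1+\varepsilon(k)$ with $\varepsilon(k)\to0$ as $k\to\infty$. Second, a long telomere ($\ge k$) belongs, in each of the two daughter cells, to a chromosome in which that end is amputated by $\overhang$ with probability $1/2$ (decided by $B_j$) and inherited intact otherwise, while $q$ at a long length is small; hence each chromosome carrying a long end contributes, summed over the two daughters, a factor at most $(1+\varepsilon(k))^2(1+e^{-\alpha\overhang})$ relative to its value in $c$ (and at most $(1+\varepsilon(k))^2e^{-\alpha\overhang}$ if both its ends are long). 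Multiplying over chromosomes (and using $\mathcal F_\nsen[V_\nsen]\le 2\,\mathbb E[V_\nsen(c^1)]$, as $c^2\stackrel{d}{=}c^1$), this yields, for $\max c\ge k$,
\[
\mathcal L_\nsen V_\nsen(c)\ \le\ \|b_\nsen\|_\infty\Big((1+\varepsilon(k))^{2K}\big(1+e^{-\alpha\overhang}\big)-1\Big)\,V_\nsen(c).
\]
Assumption~\ref{as:moment} says precisely that $\alpha_0>\frac1\overhang\ln\frac{\|b_\nsen\|_\infty}{\lambda_0}$, so one may fix $\alpha\le\alpha_0$ with $\|b_\nsen\|_\infty e^{-\alpha\overhang}<\lambda_0$; since $\varepsilon(k)\to0$, taking $k$ large makes $\theta:=\|b_\nsen\|_\infty((1+\varepsilon(k))^{2K}(1+e^{-\alpha\overhang})-1)<\lambda_0$. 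With $\mathcal K:=\{c\in G:\max c<k\}$, which is finite because $\min c\ge L_{\min}$, the drift inequality holds with $C:=\sup_{c\in\mathcal K}(\mathcal L_\nsen V_\nsen(c)-\theta V_\nsen(c))_+<\infty$.

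\textbf{Local conditions and conclusion.}
For (iii): $\mathcal K$ is finite and, by Lemma~\ref{lem:irreducibility}, any state of $\mathcal K$ is reached from any state of $\mathcal K$ through finitely many jumps, so $\psi_{t_1}[g_0](c,\nsen)>0$ for every $t_1>0$, every $c\in\mathcal K$ and every $g=\mathbf 1_{\{c'\}}$ with $c'\in\mathcal K$; minimising over the finite set $\mathcal K\times\mathcal K$ gives $\psi_{t_1}[g_0](c,\nsen)\ge c_1\,\varrho(g)$ for all $c\in\mathcal K$ and all $g\ge0$, with $\varrho$ the uniform measure on $\mathcal K$ and $c_1>0$, and the accompanying local regularity required by the abstract theorem follows likewise from finiteness of $\mathcal K$ and Assumption~\ref{as:ratebound}. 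Feeding (i)--(iii) in produces a positive $\eta_\nsen:G\to\mathbb R$, a probability measure $\nu_\nsen$ on $G$, a rate $\lambda$, a function $V_\nsen$ of the stated form, and the claimed geometric convergence in $V_\nsen$-weighted norm; applying it to $g=\mathbf 1_F$ with $F\subset G$ finite gives $\psi_t[g_0](c,\nsen)\sim\eta_\nsen(c)\,\nu_\nsen(F)\,e^{\lambda t}$ with $\eta_\nsen(c),\nu_\nsen(F)>0$, hence $\lambda=\lambda_0$ by \eqref{eq.def.lambda0}.

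\textbf{Anticipated main difficulty.}
The crux is the drift bound: one must control $\mathcal F_\nsen[V_\nsen]/V_\nsen$ uniformly over all $c$ with $\max c\ge k$, not only those with a single long telomere, carefully tracking the interaction of the truncations $(\cdot-k+\overhang)_+$ with the $\overhang$-attrition and the (a priori large) elongations --- so that telomeres sitting just below $k$, or near $L_{\min}$, do not spoil the estimate --- the coupling through $B_j$ of a chromosome's two ends and of the two daughters, and the residual telomerase contributions, so as to extract an explicit $\theta<\lambda_0$. This is exactly why the quantitative form $\alpha_0>\frac1\overhang\ln(\|b_\nsen\|_\infty/\lambda_0)$ of the moment assumption is needed, rather than mere finiteness of some exponential moment of $\mu_f$.
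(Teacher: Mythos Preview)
Your proposal is correct and follows essentially the same route as the paper: reduce the semigroup on active cells to a Foster--Lyapunov drift inequality for the exponential function $V_\nsen$ of the stated form, then invoke the abstract quasi-stationarity criterion (the paper cites Theorem~5.1 of~\cite{ChampagnatVillemonais2017}). The differences are only organisational. First, the paper normalises by $e^{-\|b_\nsen\|_\infty t}$ so as to work with a genuinely sub-Markov process $\widetilde{\mathcal C}$ and the shifted rate $\widetilde\lambda_0=\lambda_0-\|b_\nsen\|_\infty$; since your generator satisfies $\mathcal L_\nsen=\mathcal L+\|b_\nsen\|_\infty$ (with $\mathcal L$ the paper's generator), your condition $\theta<\lambda_0$ is exactly the paper's $\mathcal L V_\nsen\le(\widetilde\lambda_0-\varepsilon)V_\nsen+C\mathbf 1_H$. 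Second, for the drift estimate the paper does not use your chromosome-wise product bound but instead introduces the event $A_k(c)$ that no telomere above $k/2$ is elongated and no elongation exceeds $k/2-\overhang$; on $A_k(c)$ one has $V_k(c')\le V_k(c)$ unconditionally and $V_k(c')\le e^{-\alpha\overhang}V_k(c)$ with probability at least $1/2$ when $\max c\ge k$, while the complement $A_k(c)^c$ is controlled by H\"older's inequality (this is where $\alpha<\alpha_0$ is used). This decomposition handles the ``medium-length'' telomeres and the potentially large elongations in one stroke, which is precisely the point you flagged as the anticipated main difficulty; the resulting bound $\mathcal L_\nsen V_\nsen\le(\|b_\nsen\|_\infty e^{-\alpha\overhang}+o_k(1))V_\nsen$ on $\{\max c\ge k\}$ coincides with yours.
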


    \begin{rem}
        The above proposition is stated for $\lambda_0>0$ which corresponds to exponentially growing populations, which is our main focus. However, a straightforward adaptation of the proof shows that the result also holds if $\inf (d_{\nsen}+s_{\nsen}+\lambda_0)>0$. 
        In addition, numerical simulations  suggest that this result may hold in a more general context, at least for some choices of the model parameters and without restriction on $\lambda_0$ (see Remark~\ref{rem:generalization}).
    \end{rem}

\begin{proof}
    
\newcommand{\tC}{\widetilde{\mathcal C}}

In order to prove the proposition, we consider the semigroup on $L^\infty(G)$ (the space of bounded measurable functions on $G$), defined, for all $c\in G$ and all bounded measurable function $g:G\to\mathbb R$, by
\begin{align}
\nonumber
\widetilde \varphi_t[g](c)&:=e^{-\|b_{\nsen}\|_{\infty}t}\phi_t[g_0](c,\nsen)\\
\nonumber &= \mathbf{E}_{(c,\nsen)}\left[\exp\left(-{\int_0^t (\|b_{\nsen}\|_\infty-b_\nsen(\mathcal{C}_v)) {\rm d}v}\right) g_0(\mathcal{C}_t, \nsen)\mathbf{1}_{t<\tau}\right]\\
\label{eq.phit}
&=:\mathrm{E}_{c}\left[g(\tC_t)\mathbf{1}_{t<\widetilde{\tau}}\right],
\end{align}
 where $(\tC_t)_{t\geq 0}$ is a pure jump sub-Markov process on $G$ evolving as $\mathcal C$ under $\mathbf E_{c,\nsen}$ but sent to a cemetery point $\dagger\notin G$ at time
 \[
  \widetilde\tau := \inf\left\{t > 0 : \mathcal{C}_t \in (E_{\nsen})^c \right\} \wedge
  \inf\left\{t > 0 : \int_0^t (\|b_{\nsen}\|_\infty-b_{\nsen}(\mathcal{C}_v)){\rm d}v > \mathbf{e}\right\},
 \]
 where $\mathbf{e}$ is an independent rate $1$ exponential random variable (\textit{i.e.} $(\tC_t)_{t\geq 0}$ is sent to $\dagger$ from  when it reaches $\left(E_{\nsen}\right)^c=E_{\sen}\cup \{\partial\}$ and at an additional rate $\|b_{\nsen}\|_\infty-b_{\nsen}$).
 More formally, its infinitesimal generator for bounded measurable functions $g:G\to\mathbb R$ is given by 
\begin{align}
\nonumber
\mathcal L g(c)&=2b_{\nsen}(c)\left(\frac{1}{2}\mathcal F[g_0](c,\nsen)-g(c)\right)-(d_{\nsen}(c)+s_\nsen(c)+\|b_{\nsen}\|_\infty-b_{\nsen}(c))g(c)\\
\label{eq.generator_phi}
&=b_{\nsen}(c)\mathcal F[g_0](c,\nsen)-(d_{\nsen}(c)+s_\nsen(c)+\|b_{\nsen}\|_\infty+b_{\nsen}(c))g(c),
\end{align}
where $\mathcal F$, defined in~\eqref{branchop}, satisfies
\begin{align*}
\mathcal F[g_0](c,\nsen)
&=2\mathbb E\left[ \mathbf 1_{\min c^1\geq L_{\min}}g\left((m_j -  \overhang B_j + \Lambda_{m_j-  \overhang B_j,j}, n_j - \overhang(1-B_j) + \Lambda'_{n_j- \overhang(1-B_j) ,j})_{j = 1}^K\right)\right].
\end{align*}
The irreducibility of $\tC$ is a direct consequence of the irreducibility of $\mathcal C$ proved in Lemma~\ref{lem:irreducibility}. In particular, we can define
\begin{align*}
\widetilde \lambda_{0}=\inf\left\{\lambda\in\mathbb R,\ \text{such that}\ \liminf_{t\to \infty} e^{-\lambda t} \widetilde \varphi_{t}[\mathbf 1_H](c)<+\infty \right\},
\end{align*}
independently of $c\in G$ and of the finite set $H\subset G$. We observe, by Lemma~\ref{lem:manytoone}, \eqref{eq.def.lambda0} and \eqref{eq.phit}, that
\begin{align}
\label{eq.tildelambda0_lambda0}
\widetilde \lambda_0=\lambda_0-\|b_\nsen\|_\infty.
\end{align}

Our aim is to apply Theorem~5.1 of~\cite{ChampagnatVillemonais2017}. In order to do so, it is sufficient to find a Lyapunov type function, $V_\nsen:G\to[1,+\infty)$, such that 
\begin{align}
\label{eq:lyap}
\mathcal LV_\nsen\leq (\widetilde \lambda_0-\varepsilon)V_\nsen+C\mathbf 1_H,
\end{align}
 for some $\varepsilon>0$ and finite $H\subset G$. Beware that the convention for the sign of $\lambda_0$ is not the same in the reference. Indeed, the constant $\lambda_0$ therein refers to a \textit{death rate}, while in our case it refers to a \textit{growth rate}. Hence, one need to replace $\lambda_0$ by $-\lambda_0$ in Theorem~5.1 of~\cite{ChampagnatVillemonais2017} to reconcile with our setting.
 
 In order to find $V_{\nsen}$, we set $\alpha\in\left(\frac{1}{\overhang}\ln\frac{\|b_{\nsen}\|_\infty}{\lambda_0},\alpha_0\right)$, where $\alpha_0$ is given by Assumption~\ref{as:moment}. We look at functionals of the type
 \begin{align*}
 V_{k}(c):=\exp\left[\alpha\sum_{j=1}^K\big((m_j-k+\overhang)_+ + (n_j-k+\overhang)_+\big)\right], \quad k\geq 0,
 \end{align*}
with $c = (m_j, n_j)_{j = 1}^K\in G$.
We introduce the event
 \begin{align*}
A_k(c)&=
	\big\{\Lambda_{m_j-\overhang B_j,j}=0,\,\forall j\text{ s.t. }m_j>\nicefrac k2\big\}\cap\big\{\Lambda'_{n_j- \overhang(1-B_j),j}=0,\, \forall j\text{ s.t. }n_j>\nicefrac k2\big\}\\
	& \qquad
		\cap\big\{\Lambda_{m_j-\overhang B_j,j}\vee \Lambda'_{n_j- \overhang(1-B_j),j}<\nicefrac k2-\overhang,\,\forall j\big\},
\end{align*}
and its probability
\[
p_k(c)=\mathbb P\left(A_k(c)\right).
\]
It follows from Assumption~\ref{as:qgeom} that
for all $c\in E$ the probability of the complementary of $A_k(c)$ satisfies
\begin{align*}
\mathbb P\left(A_k(c)^c\right)
& \leq \sum_{j=1}^K \left[q(m_j-\omega B_j)1_{m_j>\nicefrac k2}
+ q(n_j-\omega (1-B_j))1_{n_j>\nicefrac k2}+ 2\,\mathbb P(L\geq \nicefrac k2-\omega)\right]
\\
& \xrightarrow[k\to+\infty]{} 0,
\end{align*}
then
\[
\inf_{c\in E} \,p_k(c)\xrightarrow[k\to+\infty]{} 1.
\]
Denoting by $c'=(m_j-\overhang B_j + \Lambda_{m_j-\overhang B_j,j}, n_j - \overhang(1-B_j) + \Lambda'_{n_j- \overhang(1-B_j) ,j})_{j = 1}^K$ the configuration of the process after a jump that doesn't lead to the removal of the particle from the system, we easily check that
\begin{align}
\label{eq.maj.Vkac'}
V_{k}(c')\leq \exp\left(\alpha\sum_j(\Lambda_{m_j-\overhang B_j,j}+\Lambda'_{n_j- \overhang(1-B_j),j})\right) V_{k}(c).
\end{align}
On the one hand, by H\"older's inequality, for any fixed $1<p<\frac{\alpha_0}{\alpha}$, we have
\begin{align}
\nonumber
\mathbb E\Big(\mathbf 1_{A_k(c)^c} \exp(& \alpha\sum_j(\Lambda_{m_j-\overhang B_j,j}+\Lambda'_{n_j- \overhang(1-B_j),j})\Big)
\\ \nonumber 
&\leq (1-p_k(c))^{1-\nicefrac1p}\,\left(\mathbb E\left(\exp(p\alpha\sum_j(\Lambda_{m_j-\overhang B_j,j}+\Lambda'_{n_j- \overhang(1-B_j),j})\right)\right)^{\nicefrac1p}\\
\label{eq.complem.Ak}
&\leq (1-p_k(c))^{1-\nicefrac1p}(\theta_{p\alpha})^{\nicefrac{2K}p},
\end{align}
where $\theta_{p\alpha}:=\mathbb E\left(\exp(p\alpha L)\right)= \sum_{n\geq 0} \mu_f\{n\}\, \exp(p\alpha n)<+\infty$ by Assumption~\ref{as:moment}.
On the other hand, conditioning on the event $A_k(c)$,  the left-hand telomere length $m_j'$ of the $j$-th chromosome of $c'$ satisfies
\[
m_j' -k + \overhang = m_j-\overhang B_j + \Lambda_{m_j-\overhang B_j,j} - k+\overhang \leq 
\begin{cases}
m_j-k+\overhang-\overhang B_j & \text{if $m_j>\nicefrac k2$}\\
0  		 & \text{if $m_j\leq\nicefrac k2$}
\end{cases}
\]
and similarly for the right-hand side. Then $V_{k}(c')\leq V_{k}(c)$,
and if $c$ is such that $\max c\geq k$, then with probability at least $1/2$ (for at least one chromosome side which leads to $\max c\geq k$), $V_{k}(c')\leq e^{-\alpha\overhang} V_{k}(c)$.
We then deduce from \eqref{eq.generator_phi} that, for all $c\in G$ such that $\max c\geq k$,
\begin{align*}
\mathcal L V_{k}(c) &\leq 2b_{\nsen}(c)\left(\frac{p_k(c)}{2}+\frac{p_k(c)}{2}e^{-\alpha\overhang} + (1-p_k(c))^{1-\nicefrac1p}(\theta_{p\alpha})^{\nicefrac{2K}p}\right)V_{k}(c)\\
&\hspace{3cm}-(d_{\nsen}(c)+s_\nsen(c)+\|b_{\nsen}\|_\infty+b_{\nsen}(c))V_{k}(c)\\
&\leq \left(b_{\nsen}(c)e^{-\alpha\overhang}+2b_{\nsen}(c) (1-p_k(c))^{1-\nicefrac1p}(\theta_{p\alpha})^{\nicefrac{2K}p}-d_{\nsen}(c)-s_\nsen(c)-\|b_{\nsen}\|_\infty\right)V_{k}(c)\\
&\leq \left(b_{\nsen}(c)e^{-\alpha\overhang}+2b_{\nsen}(c) (1-p_k(c))^{1-\nicefrac1p}(\theta_{p\alpha})^{\nicefrac{2K}p}-\|b_{\nsen}\|_\infty\right)V_{k}(c).
\end{align*}
As $\alpha$ has been chosen such that $\alpha>\frac{1}{\overhang}\ln\frac{\|b_{\nsen}\|_\infty}{\lambda_0}$ then
\[
\varepsilon:=\frac12(\|b_{\nsen}\|_\infty+\widetilde\lambda_0-\|b_\nsen\|_\infty e^{-\alpha\overhang})=\frac12(\lambda_0-\|b_\nsen\|_\infty e^{-\alpha\overhang})>0,
\]
where we used~\eqref{eq.tildelambda0_lambda0}, so that $\|b_{\nsen}\|_\infty-b_{\nsen}(c)e^{-\alpha\overhang}\geq 2\varepsilon -\widetilde\lambda_0$.
Moreover, as $\theta_{p\alpha}<\infty$, then choosing $k$ large enough such that $2b_{\nsen}(c)(1-p_k(c))^{1-\nicefrac1p}(\theta_{p\alpha})^{\nicefrac{2K}p}\leq \varepsilon$ for all $c\in G$, we deduce that, for all $c\in G$ such that $\max c\geq k$, 
\begin{align*}
\mathcal L V_{k}(c) 
\leq (\widetilde\lambda_0-\varepsilon)V_{k}(c).
\end{align*}
Now, choosing the finite set  $H:=\{c\in G,\ \max c\leq k\}$,  from \eqref{eq.maj.Vkac'} and \eqref{eq.generator_phi}, $\mathcal L V_{k}\leq V_{k}$ which is bounded in the compact set $H$. We deduce that there exists a constant $C>0$ such that~\eqref{eq:lyap} holds true for $V_\nsen=V_{k}$, which concludes the proof.
\end{proof}

\begin{proof}[Proof of Theorem~\ref{thm:mainresult}]
    In a similar manner to \eqref{eq.phit}, consider the pure jump sub-Markov process $(\bar{\mathcal C}_t,\bar{\mathcal X}_t)_{t\geq 0}$ on $E$ evolving as $(\mathcal C,\mathcal X)$ under $\mathbf E_{(c,x)}$, but is removed from the system at an additional rate $\|b_\nsen\|_\infty-\mathbf 1_{x=\nsen}\,b_{\nsen}(c,x)$. 
Denote by $\mathrm E_{(c,x)}$ the expectation operator associated to this process, and by $\bar \varphi_t$ the semigroup on $L^\infty(E)$ (the space of bounded measurable functions on $E$), defined, for all $(c,x)\in E$ and all bounded measurable functions $g:E\to\mathbb R$, by
\begin{align}
\nonumber
\bar \varphi_t[g](c,x)&:=e^{-\|b_{\nsen} t\|_{\infty}}\phi_t[g](c,x)=\mathrm E_{(c,x)}\left[g(\bar{\mathcal C}_t,\bar{\mathcal X}_t)\mathbf 1_{t<\bar\tau}\right],
\end{align}
where  $\bar\tau$ denotes the time at which the particle is removed from the system $(\bar{\mathcal C}_t,\bar{\mathcal X}_t)_{t\geq 0}$.
   Note that, before time $T_{\sen}:=\inf\{t\geq 0,\ \bar{\mathcal X}_t=\sen\}$, the process $\bar{\mathcal C}_t$ evolves as the process $\widetilde{\mathcal C}_t$ introduced in the proof of Proposition~\ref{prop:quasicomp1}. Combining this fact with \eqref{eq.phit}, \eqref{eq.tildelambda0_lambda0} and Proposition~\ref{prop:quasicomp1}, we have
    \begin{align*}
    \left|e^{-\widetilde\lambda_0 t}\mathrm E_{(c,x)}\left[g(\bar{\mathcal C}_t,\bar{\mathcal X}_t)\mathbf 1_{t<\bar\tau\wedge T_\sen}\right]-\eta_\nsen(c)\nu_\nsen[g(\cdot,\nsen)]\right|\leq C e^{-\gamma t} V_\nsen(c),\ \forall t\geq 0,\ \forall (c,x)\in E_\nsen.
    \end{align*}
    Next, we make use of Theorem~3.1 of~\cite{CV2021reducible} for the discrete time process $(\bar{\mathcal C}_n,\bar{\mathcal X}_n)_{n\in\mathbb N}$, in the situation corresponding to Assumption~A1 therein. In order to do so, we set $D_1=E_\nsen$, $D_2=E_\sen$, $j_{0,P}=0$, $\theta_{0,P}=e^{-\widetilde\lambda_0}$, $\gamma=e^{-\|b_\nsen\|_\infty}<\theta_{0,P}$,  $c_1=1$, $W_R\equiv \mathbf 1_{(c,x)\in D_2}$ and $W_P=V_\nsen\mathbf 1_{(c,x)\in D_1}$. We deduce that for $g:E\to\mathbb R$ such that $|g|\leq V:=W_P+W_R$,
    \begin{align}
    \label{eq:discretetime}
    \left|e^{-\widetilde \lambda_0 n}\mathrm E_{(c,x)}\left[g(\bar{\mathcal C}_n,\bar{\mathcal X}_n)\mathbf 1_{n<\bar\tau}\right]-\eta(c,x)\nu[g]\right|\leq C\beta^n V(c,x), \quad \forall (c,x)\in E, \forall n\in \mathbb N,
    \end{align}
    for some $\beta\in(0,1)$. Now fix $h\geq 0$ and consider 
    \[
   g_h:(c,x)\mapsto \mathrm E_{(c,x)}\left[g(\bar{\mathcal C}_h,\bar{\mathcal X}_h)\mathbf 1_{h<\bar\tau}\right].
    \]
Applying~\eqref{eq:discretetime} to $g_h$ and using the Markov property at time $h$, we obtain
    \[
    \left|e^{-\widetilde \lambda_0 n}\mathrm E_{(c,x)}\left[g(\bar{\mathcal C}_{n+h},\bar{\mathcal X}_{n+h})\mathbf 1_{n+h<\bar\tau}\right]-\eta(c,x)\nu[g_h]\right|\leq C\beta^n  V(c,x)\xrightarrow[n\to+\infty]{} 0.
    \]
    Taking $(c,x)\in E_\nsen$, we thus observe that $\nu$ is a quasi-limiting distribution and hence a quasi-stationary distribution for the process (this is a classical result from the theory of quasi-stationary distributions, see for instance the works of ~\cite{MeleardVillemonais2012,ColletMartinezEtAl2013,DoornPollett2013}). In particular, there exists $\lambda\leq 0$ such that $\mathrm P_\nu\left((\bar{\mathcal C}_t,\bar{\mathcal X}_t)\in\cdot ,\ t<\bar\tau\right)=e^{\lambda t}\nu(\cdot)$, for all $t\geq 0$. The above convergence shows that $\lambda=\widetilde \lambda_0$ and we deduce that $\nu[g_h]=e^{\widetilde \lambda_0 h}\nu[g]$. Finally, we proved that
   \[
   \left|e^{-\widetilde \lambda_0 n}\mathrm E_{(c,x)}\left[g(\bar{\mathcal C}_{n+h},\bar{\mathcal X}_{n+h})\mathbf 1_{n+h<\bar\tau}\right]-\eta(c,x)e^{\widetilde \lambda_0 h}\nu[g]\right|\leq C\beta^n  V(c,x).
   \]
   Up to a change in the constant $C$ and setting $\gamma=\ln\nicefrac1\beta$, this concludes the proof of the theorem.
\end{proof}

\section{Numerical simulations}\label{sec:simulations}
This section is dedicated to the numerical exploration of the theoretical results of the previous section. 
In Section~\ref{sec:num_wo_telo}, we consider the model without telomerase (i.e. with $q\equiv 0$) and study its limiting population size, through the so-called Hayflick limit, as a function of the parameters of the model.
In Section~\ref{sec:num_with_telo}, we consider the model with telomerase and study, first the values of the Malthusian parameter $\lambda_0$, defined by \eqref{eq.def.lambda0}, as a function of the parameters of the model, and second the convergence of the process as stated by Theorem~\ref{thm:mainresult} when $\lambda_0>0$. 
Notice that, contrary to the Hayflick limit which takes transitory effects into account, $\lambda_0$ and the limit of the process only describe the asymptotic average behaviour, and do not depend on the initial condition.

Note that under Assumption \ref{as:lambda0pos}, the expected number of particles grows exponentially and thus a na\"ive numerical simulation of the process is not appropriate. Instead, we 
use an interacting particle approximation scheme based on genetic algorithms, described in Algorithm~\ref{algo.ibm} in appendix, which allows us to replace the mean semigroup by the average behaviour of a fixed size population. We refer the reader to the articles of~\cite{DelMoral2004,DelMoral2013} for detailed results and methods allowing to deduce the average dynamics of the population (including in particular $t\mapsto \mathbb E_{(c_0,\nsen)}(N_t)$ as well as the Malthusian parameter $\lambda_0$) from simulations of a population with fixed size $N$.

For simulations presented in Sections~\ref{sec:num_wo_telo} and \ref{sec:num_with_telo}, we compute the expectation of $100$ runs of Algorithm~\ref{algo.ibm} with $N=10000$ particles.  
Where possible, we have chosen parameter values that are consistent with empirical data or existing literature. In particular, the experiments described by~\cite{Chow:2012b4f} suggest that $\overhang$ should be chosen in the range $[12{\rm b},300{\rm b}]$. However, when information on possible parameter values is not available, we have chosen the values arbitrarily.

We remind the reader that we have included a table of notation in Appendix \ref{sec:notation} containing a description of the model parameters and a reference to where they were first introduced in this article. 

\subsection{Without telomerase}
\label{sec:num_wo_telo}
In this section, we consider the model without telomerase. As such, this model describes the dynamics of a population of cells whose telomere lengths can only decrease with time.
More precisely, we choose $q\equiv 0$ and by default \label{notation:c0}
\begin{align}
\label{eq:param1}
L_{min}=2\,\kb,\ \overhang=0.2\,\kb,\ K=46,\ c_0=\prod_{i=1}^K(8\kb,8\kb),
\end{align}
so that all the telomeres in the population have initial length  $8\kb$ (the unit kb refers to $1$ kilobase, sometimes written as kbp for kilobase pair), and\label{notation:rs}
\begin{align}
\label{eq:param2}
s_{\nsen}(c)=e^{-{(\min c-L_{min})}/{r_s}}\text{ with $r_s=100$},\ b_{\nsen}(c)=1,\  d_{\sen}(c)=d_{\nsen}(c)=0.
\end{align}
In particular, when all the cells have (at least) one telomere shorter than $L_{min}$, they are all in a non-active state, and the 
population can no longer evolve.

 One quantity of interest is the final size of the population, which we denote by $N_\infty$\label{notation:Ninfty}.
The Hayflick limit~\citep{HayflickMoorhead1961,ShayWright2000} of a cell population describes the total number of doubling of the population. Mathematically, we define this limit as\label{notation:H}
\begin{align}
\label{eq:hayflick}
H=\log_2 \mathbb E_{(c_0,\nsen)}\left[N_\infty\right],
\end{align}
where $c_0$ describes the telomere lengths of an initial cell. In what follows, we will investigate (numerically) the dependence of the Hayflick limit on certain model parameters, namely the number of chromosomes $K$, the minimal telomere length for active cells $L_{min}$, the overhang length $\overhang$, and the parameter $r_s$ of the ``deactivation'' rate $s_{\nsen}$.

\begin{rem}
    In biological experiments, the Hayflick limit is better described by $\log_2 \frac{N_\infty}{N_0}$. However, under mild assumptions, $\frac{N_\infty}{N_0}$ can be approximated by $\mathbb E\left(\frac{N_\infty}{N_0}\right)$ when the initial size of the population $N_0$ is large (this is a consequence of the law of large numbers and the branching property in our model). In addition, if all the initial cells are in state $(c_0,\nsen)$, $\mathbb E\left(\frac{N_\infty}{N_0}\right)=\mathbb E_{(c_0,\nsen)}\left[N_\infty\right]$.  This justifies our mathematical definition of the Hayflick limit $H$.
\end{rem}

\medskip

We first present the evolution of the expected population size for the above choice of parameters, see Figure~\ref{fig:pop_dyn1}. In the initial population, all telomere lengths are sufficiently large so that the non-active population is negligible. From \eqref{eq:param2}, the expected population size grows exponentially with rate $b_{\nsen}(c)=1$ (see the second figure in Figure \ref{fig:pop_dyn1} where the total expected population size follows the red dotted line until $t\approx 19$).  Then the growth slows down until the population stabilises at $\mathbb E_{(c_0,\nsen)}(N_\infty)$ (see figure on the left), which corresponds to the time when the population of active cells vanishes (see figure on the right).

\begin{figure}
	\includegraphics[height=4.3cm, trim = 0cm 0cm 0cm 0.cm, clip=true]{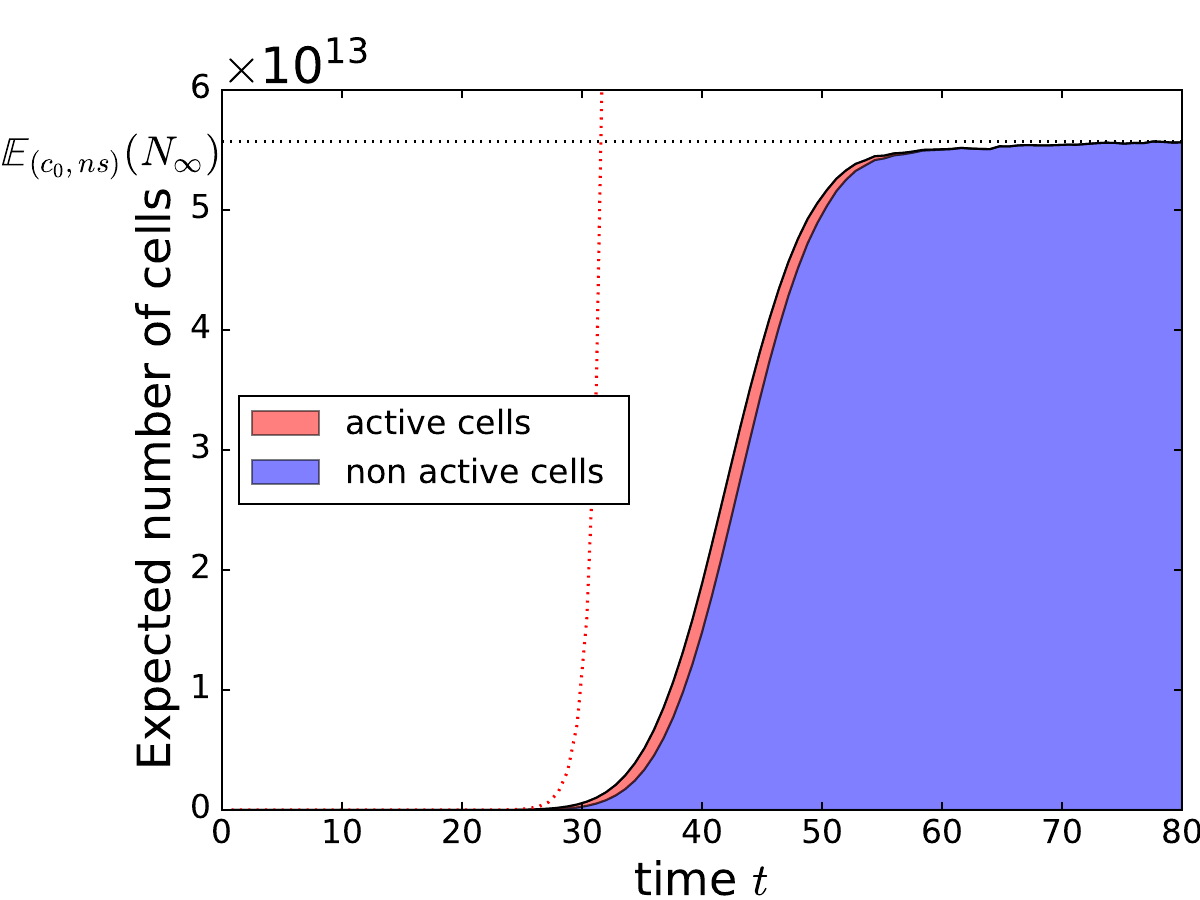}
    \includegraphics[height=4.3cm, trim = 0cm 0cm 1.7cm 0.cm, clip=true]{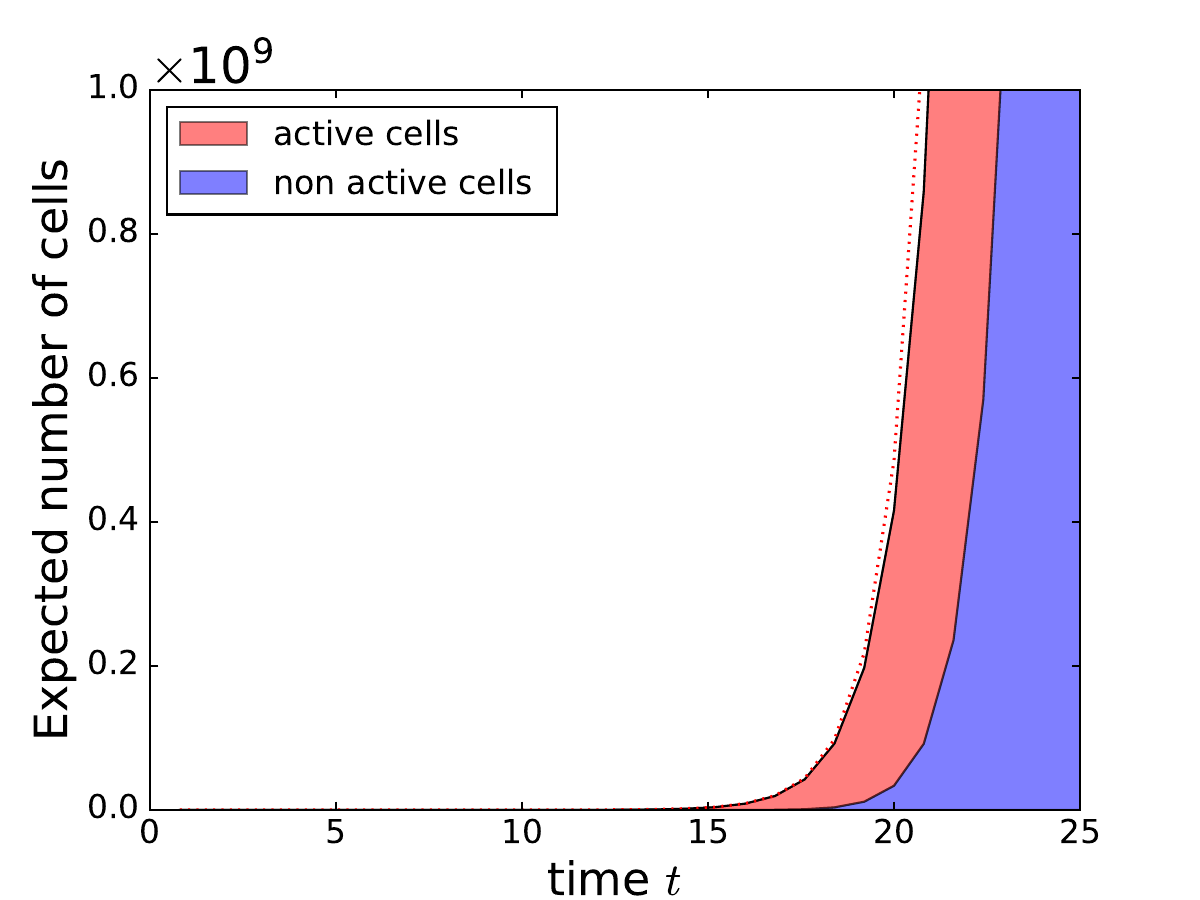}
    \includegraphics[height=4.3cm, trim = 0cm 0cm 1.7cm 0.cm, clip=true]{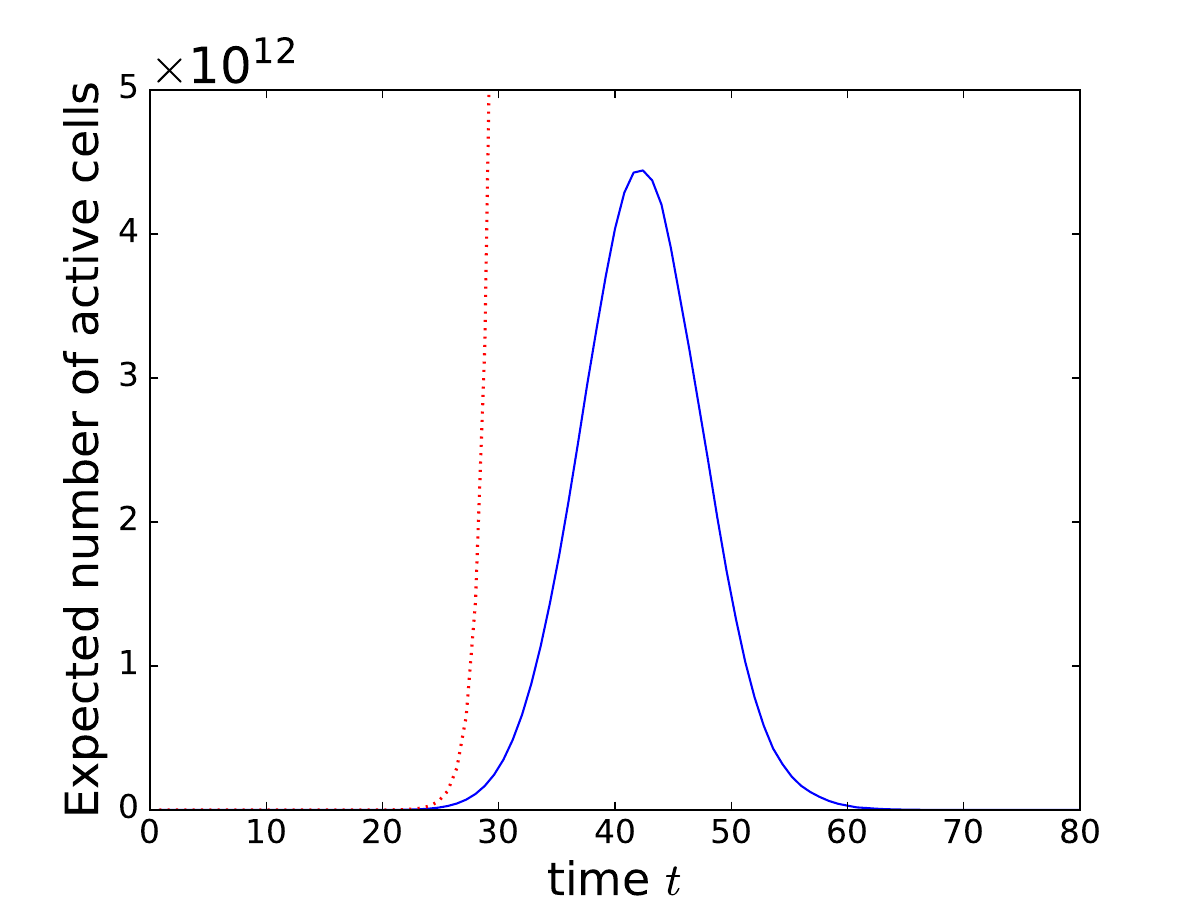}
\caption{\label{fig:pop_dyn1} Expected population size as a function of time $t\geq 0$, for the model without telomerase ($q\equiv0$) with parameters as in~\eqref{eq:param1} and~\eqref{eq:param2}. Left: the expected total population dynamics $\mathbb E_{(c_0,\nsen)}(N_t)$, with the proportion of active cells $\mathbb E_{(c_0,\nsen)}(X_t(E_\nsen))$ (red) and non active cells $\mathbb E_{(c_0,\nsen)}(X_t(E_\sen))$ (blue). Centre: zoom of the start of the dynamics in the left figure. Right: the dynamics of the expected number of active cells $\mathbb E_{(c_0,\nsen)}(X_t(E_\nsen))$. In each of the three graphs, the dotted red line is the curve of the function $t\mapsto e^{t}$.
}
\end{figure}

In Figure~\ref{fig:HayFctKLminOmegarS}, we represent the Hayflick limit as a function of the number of chromosomes $K$; the minimal length, $L_{min}$, of telomeres in active cells; the overhang $\overhang$, and the parameter $r_s$ associated with the ``deactivation'' rate $s_\nsen$.  We make the following observations.
\begin{itemize}
\item In Figure~\ref{fig:HayFctK}, we observe that the Hayflick limit is decreasing with respect to the number of chromosomes $K$. This is not surprising since one expects that the larger the number of chromosomes, the smaller (in law) the minimal telomere length in the cell. Therefore increasing the number of chromosomes increases the probability of deactivation (with our choice of $s_{\nsen}$ given by \eqref{eq:param2}), as well as  the probability that a cell is `non-active' after a division. 

With our choice for $L_{min}$, $\overhang$ and $c_0$ given in \eqref{eq:param1}, the minimal theoretical value of the Hayflick limit (when $K \to \infty$) is $(8000-2000)/200=30$, however we observe that it remains around $45$ for a large but realistic number of chromosomes. For the case with 46~chromosomes, which corresponds to the setting of Figure~\ref{fig:pop_dyn1}, we observe a Hayflick limit of $45.9$, which is comparable to the experimentally measured interval $[40,60]$ (see~\cite{HayflickMoorhead1961} and~\cite{VermaVermaEtAl2020}).
\item In Figure~\ref{fig:HayFctLmin}, we observe that the Hayflick limit decreases linearly with respect to the minimal length $L_{min}$ of active cells. This is due to the fact that increasing the minimal telomere length of active cells increases the number of non-active cells and hence decreases the number of times the population doubles.  
\item Similarly, the Hayflick limit is also decreasing with respect to the overhang $\overhang
$ (see Figure~\ref{fig:HayFctomega}) as well as the parameter $r_s$ of the deactivation rate, $s_\nsen$ (see Figure~\ref{fig:HayFctr_s}).
\end{itemize}

\begin{figure}
\begin{subfigure}{0.45\textwidth}
            \includegraphics[width=6cm]{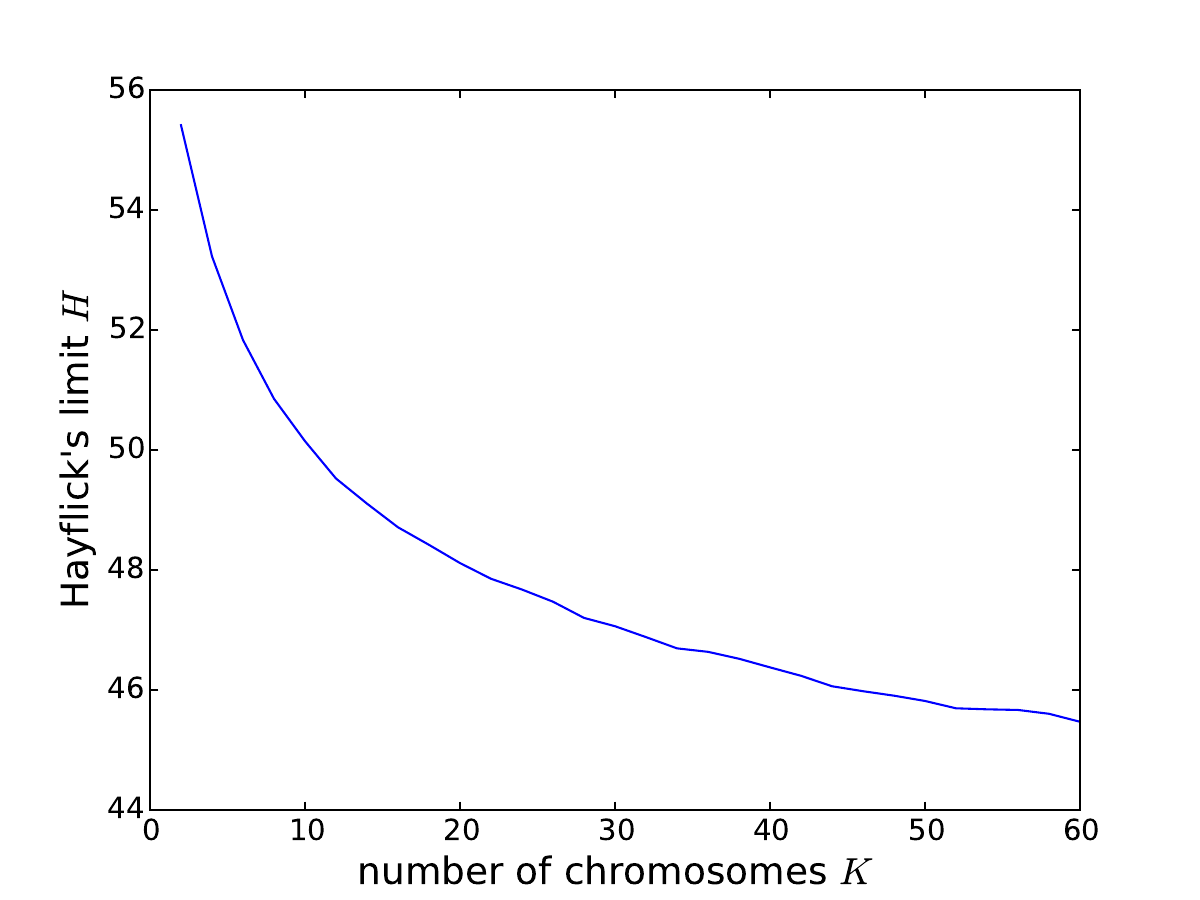}
            \caption{\label{fig:HayFctK}Hayflick limit as a function of $K$.}
        \end{subfigure}
        \begin{subfigure}{0.45\textwidth}
            \includegraphics[width=6cm]{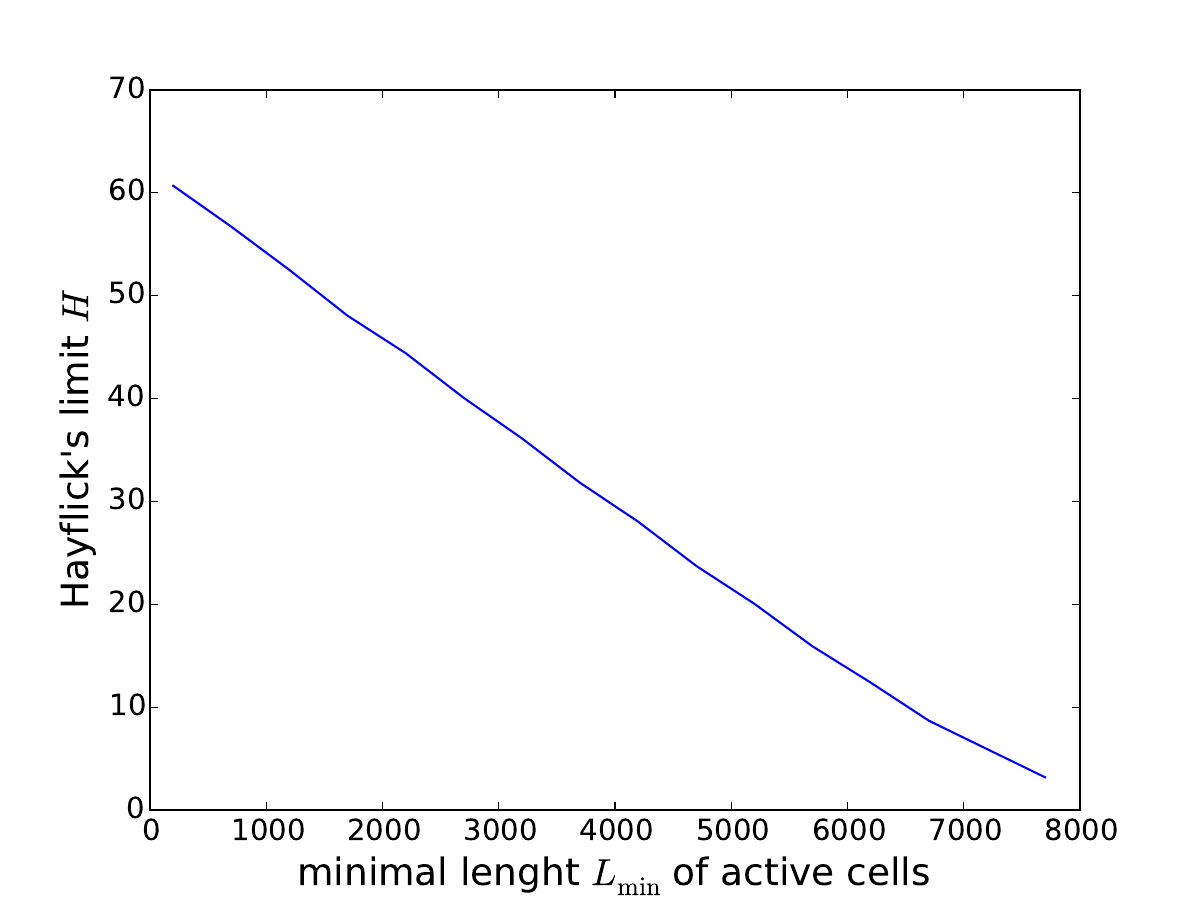}
            \caption{\label{fig:HayFctLmin}Hayflick limit as a function of $L_{min}$.}
        \end{subfigure}\\
        \begin{subfigure}{0.45\textwidth}
            \includegraphics[width=6cm]{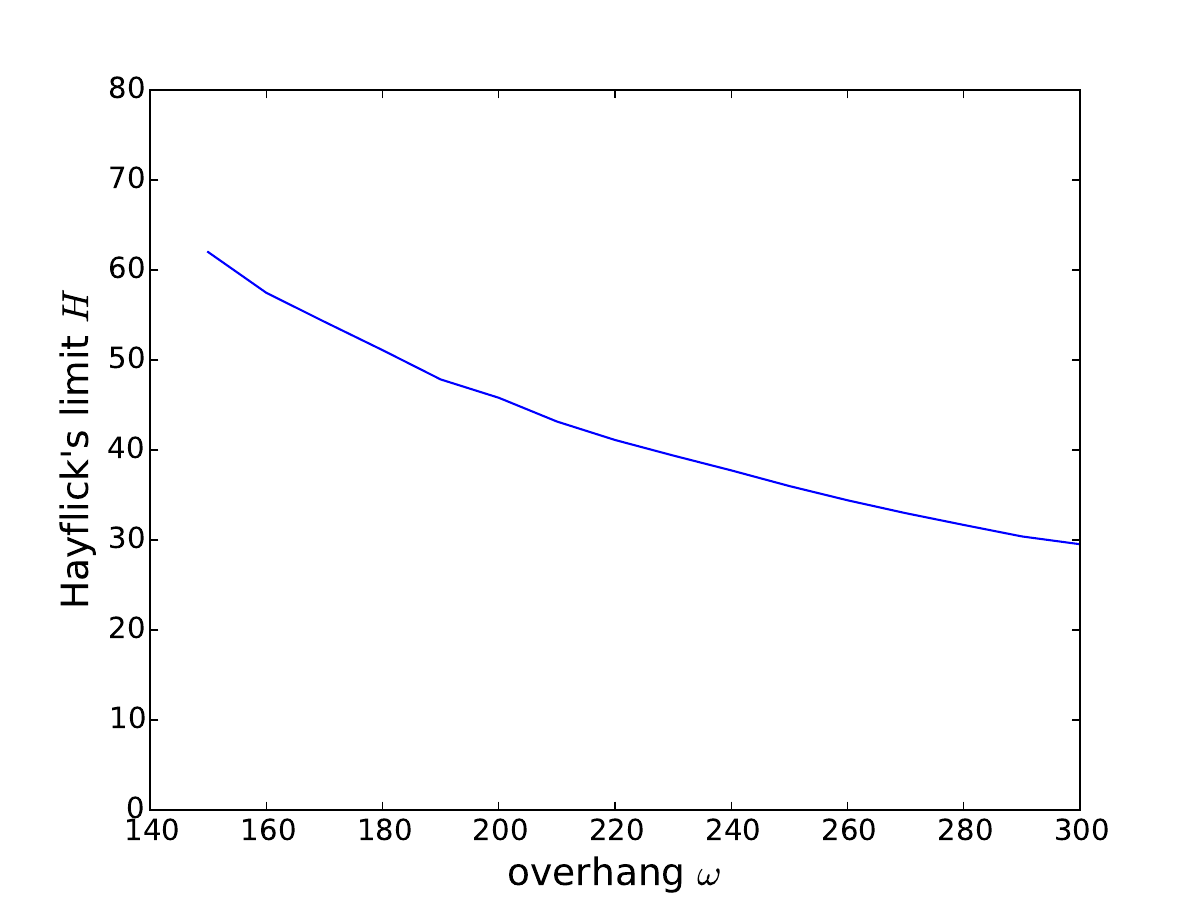}
            \caption{\label{fig:HayFctomega}Hayflick limit as a function of $\overhang$.}
        \end{subfigure}
        \begin{subfigure}{0.45\textwidth}
            \includegraphics[width=6cm]{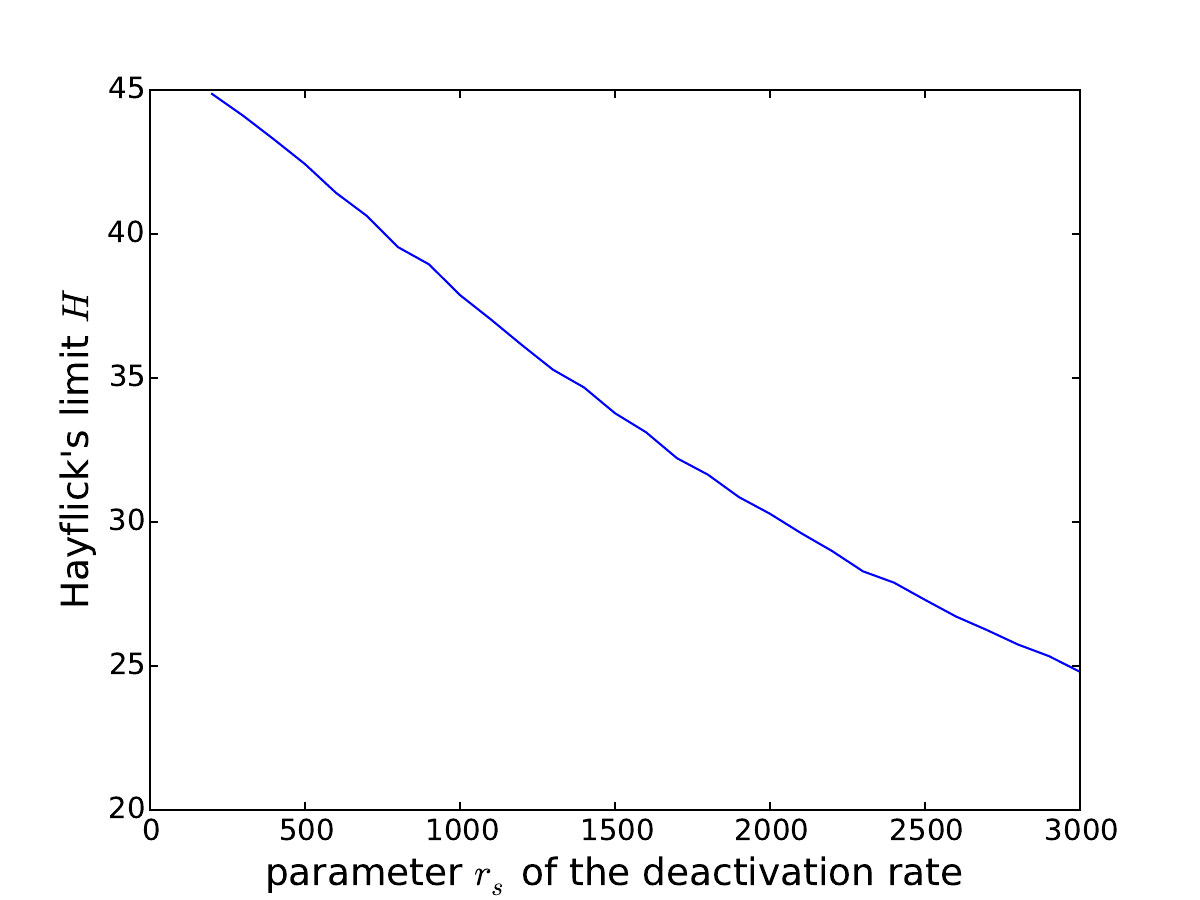}
            \caption{\label{fig:HayFctr_s}Hayflick limit as a function of $r_s$.}
        \end{subfigure}
    \caption{\label{fig:HayFctKLminOmegarS}  Hayflick limit~\eqref{eq:hayflick} for different values of $K$, $L_{min}$,  $\overhang$ and $r_s$,  all other parameters being as in~\eqref{eq:param1} and~\eqref{eq:param2}.  Note that,  the numerical approximation method used in Algorithm~\ref{algo.ibm} implies some uncertainties in the computation of the Hayflick limit, which are slightly visible, mainly in Figures~\ref{fig:HayFctK} and~\ref{fig:HayFctr_s}. }
\end{figure}

\subsection{With telomerase}
\label{sec:num_with_telo}
We consider now the model {\it with} telomerase and we choose default parameters satisfying Assumptions~\ref{as:irreducibility}, \ref {as:moment} and \ref{as:qgeom}. Our values of $q(\ell)$ are derived from the empirical measures described by~\cite{TeixeiraArnericEtAl2004} and the mechanism described by~\cite{Greider2016}. We choose the following geometric distribution for the probability that telomerase lengthens a telomere, and the following uniform distribution for the distribution of the size of the increase:
\begin{align}
\label{eq:param1telo}
q(\ell)=q_0\,2^{-(\ell-L_{min})/L_{min}}\text{ and }\mu_f=\mathcal U([0,M]),
\end{align}
where $q_0\in [0,1]$ and $M>0$.  By default, we choose $q_0=0.7$, $M=600$ and the other parameters remain as in~\eqref{eq:param1} and~\eqref{eq:param2}.

If $\lambda_0>0$ then the expected population size goes to infinity at rate $\lambda_0$ and hence the Hayflick limit is equal to infinity. On the other hand, if $\lambda_0<0$, then the asymptotic expected population size of active cells decreases exponentially fast and eventually goes extinct. Finally, if $\lambda_0=0$, then the population of active cells is critical and does not asymptotically increase nor decrease exponentially fast.

In order to understand the effect of the biological parameters on the behaviour of the population (survival, growth rate, Hayflick limit), we compute $\lambda_0$ for different  values of $s_{\nsen}$, $q$ and $\mu_f$ (equivalently, $r_s$, $q_0$ and $M$, respectively). When $\lambda_0< 0$, we compute the Hayflick limit and, when $\lambda_0>0$, we illustrate the convergence of  the telomere length distribution in the population of cells to the limiting distribution $\nu$, as stated in Theorem~\ref{thm:mainresult}.

\medskip

In Figure~\ref{fig:Lambda0_wrt_para}, we plot $\lambda_0$ as a function of the parameters $q_0$, $M$ and $r_s$. We make the following observations.
\begin{itemize}
\item As expected, the eigenvalue $\lambda_0$ increases when the parameter $q_0$ increases, since this increases then the probability of telomerase lengthening a telomere.
\item Similarly, when $M$ increases, so does $\lambda_0$ since this increases the upper bound on the amount a telomere can be lengthened by.
\item In the left-hand figure, we see that $\lambda_0$ decreases monotonically with $r_s$, the parameter that appears in the deactivation rate $s_\nsen$. The reasoning is the same as for the Hayflick limit. In fact, as we see in Figure~\ref{fig:HayFctr_s} for the model without telomerase,  increasing the parameters $r_s$ is not beneficial for the growth of the population (seen in terms of the number of times the population doubles). Similarly, in this case, it is not beneficial when the telomerase is active (seen here in terms of the population size growth rate $\lambda_0$). 
\end{itemize}

\begin{figure}
	\begin{subfigure}{0.32\textwidth}
            \includegraphics[width=5.5cm]{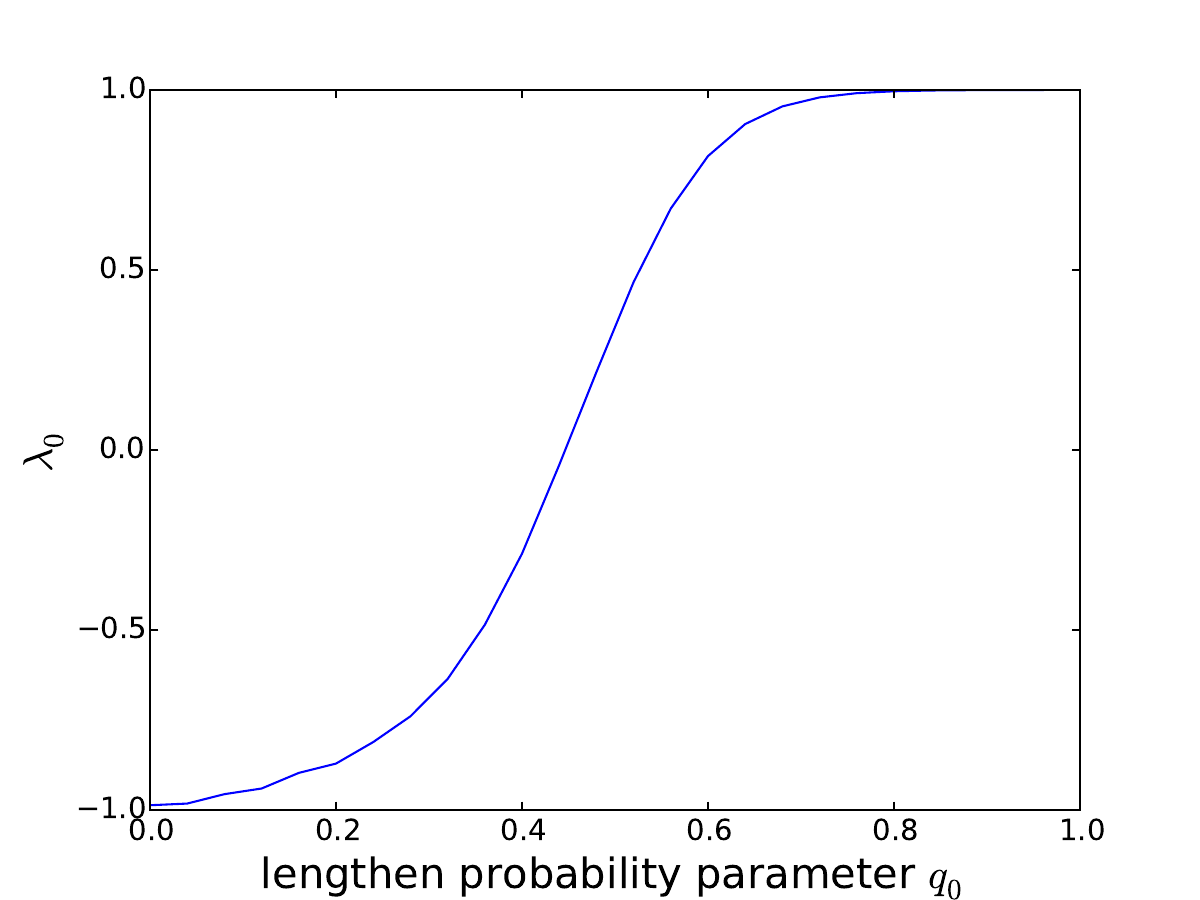}
        \end{subfigure}
        \begin{subfigure}{0.33\textwidth}
            \includegraphics[width=5.5cm]{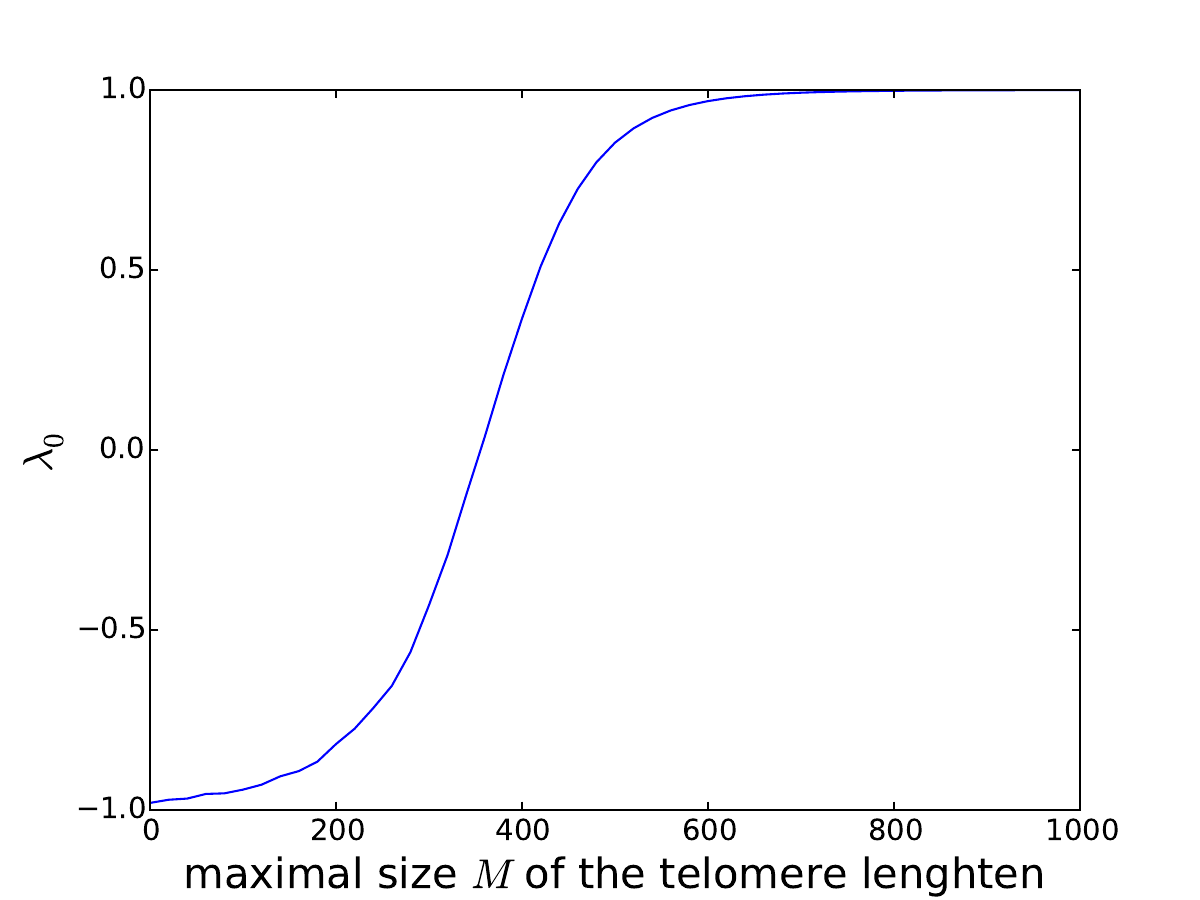}
        \end{subfigure}
        \begin{subfigure}{0.32\textwidth}
            \includegraphics[width=5.5cm]{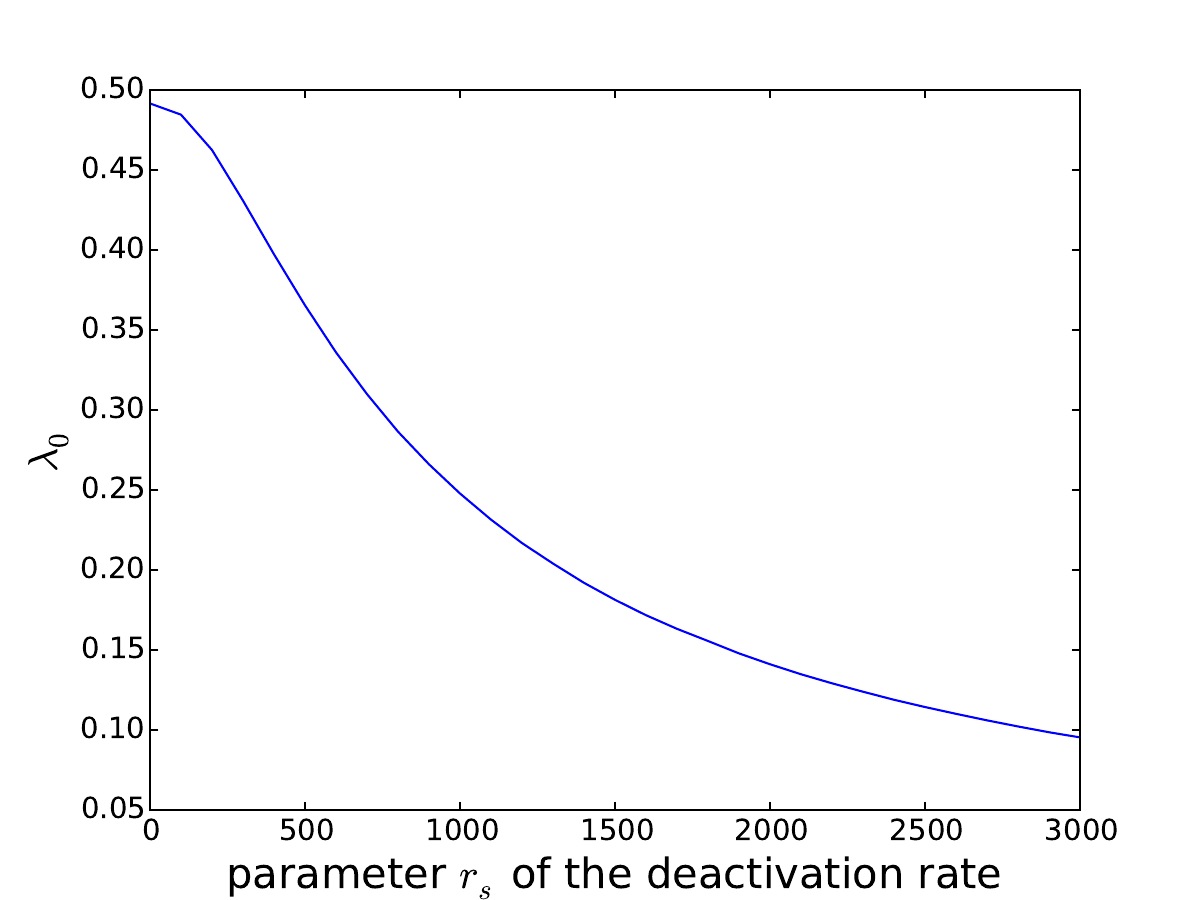}
        \end{subfigure}
    \caption{\label{fig:Lambda0_wrt_para}  Malthusian parameter $\lambda_0$~\eqref{eq.def.lambda0} for different values of $q_0$ (left), $M$ (center) and $r_s$ (right),  for the model with the telomerase mechanism given by \eqref{eq:param1telo} with the default values $q_0=0.7$, $M=600$ and all other parameters being as in~\eqref{eq:param1} and~\eqref{eq:param2}. }
\end{figure}

As previously mentioned, we also consider parameter values which yield $\lambda_0 < 0$, i.e. a subcritical population. In particular, in Figure~\ref{fig:Lambda0_wrt_para}, we see that certain values of $q_0$ and $M$ yield a subcritical population, despite the presence of telomerase. In this case, the asymptotic average population size of active cells decays exponentially until extinction, so that the total population size eventually plateaus out, and thus the Hayflick limit is finite.

To study this in more detail, in Figure~\ref{fig:HayFctq0M}, we represent the Hayflick limit as a function of the parameters $q_0$ and $M$ for these values.  As expected, the monotonic behaviour is the same as in Figure~\ref{fig:Lambda0_wrt_para}. 

\begin{figure}
        \begin{subfigure}{0.49\textwidth}
            \includegraphics[width=6.9cm]{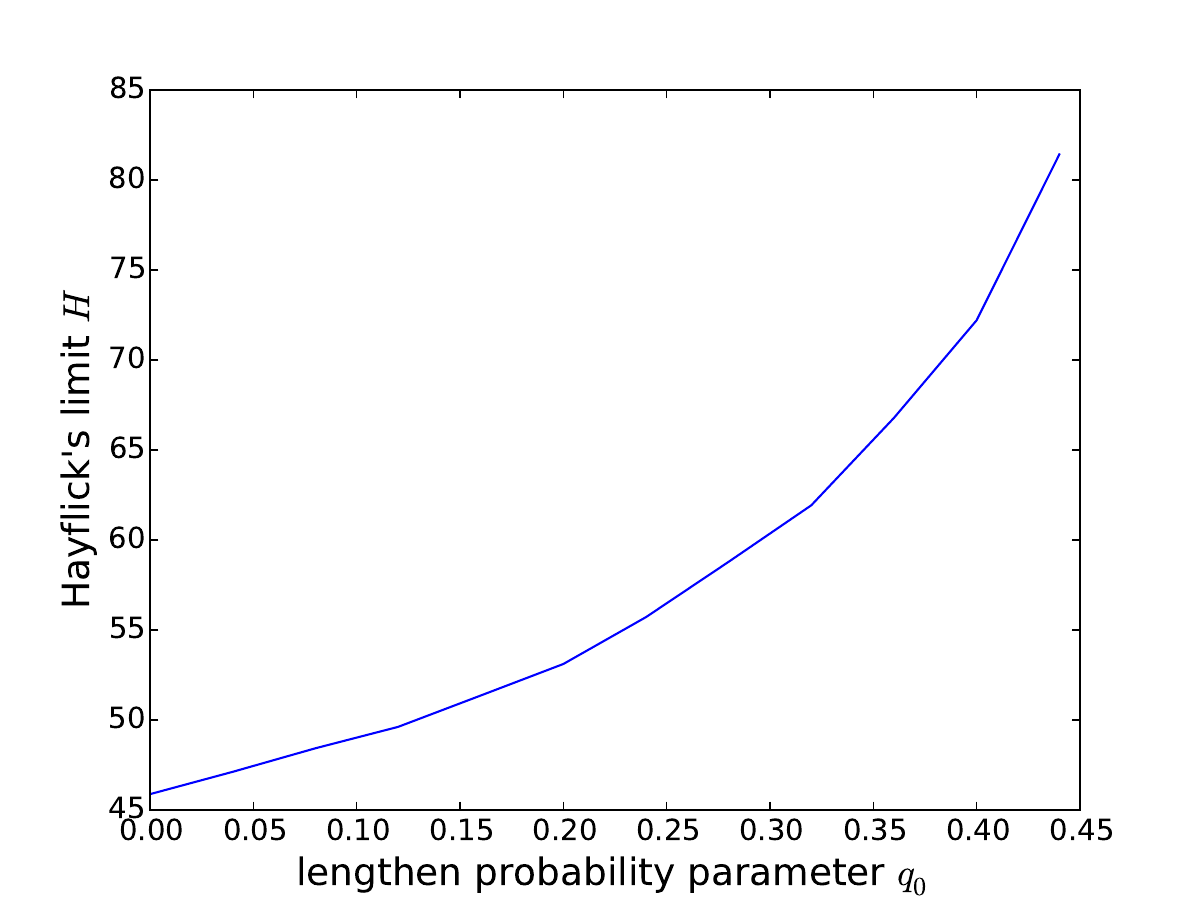}
        \end{subfigure}
        \begin{subfigure}{0.49\textwidth}
            \includegraphics[width=6.9cm]{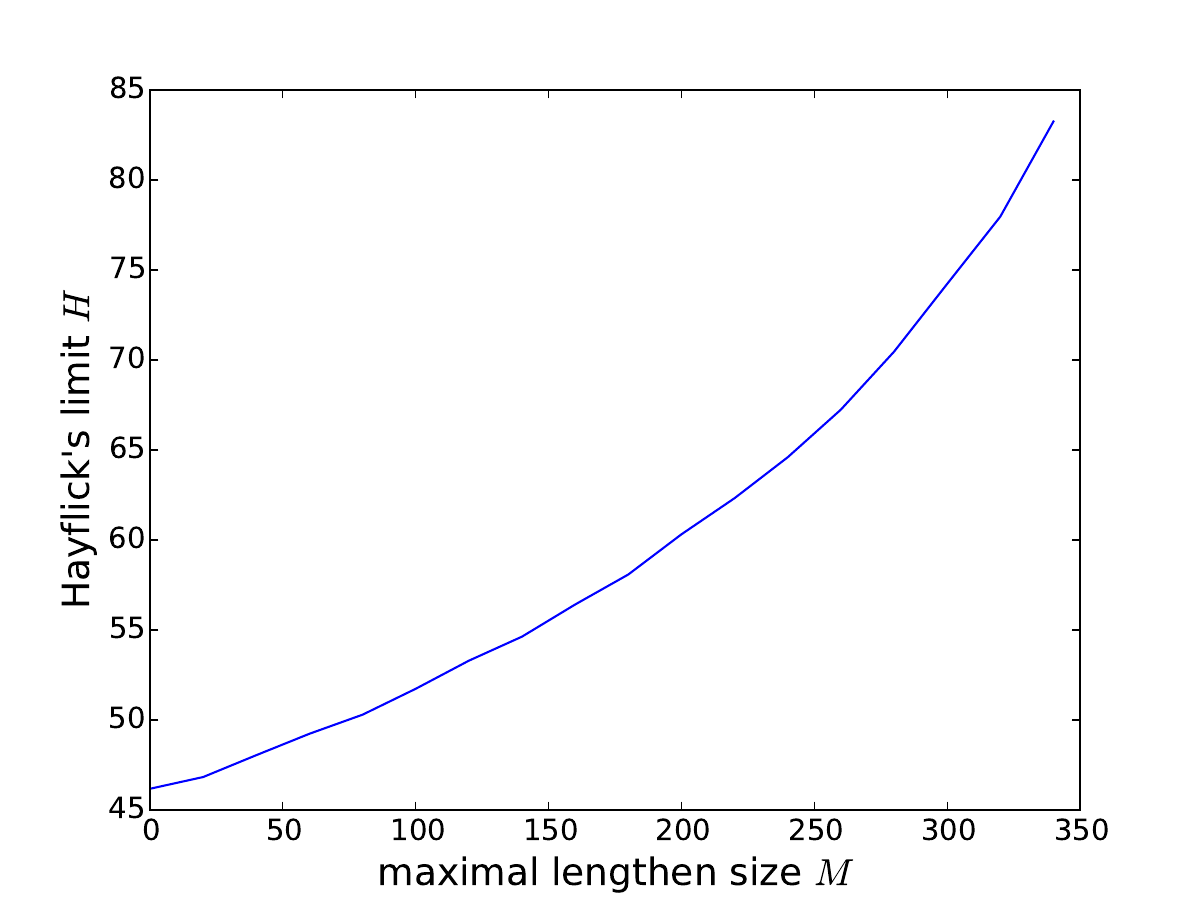}
        \end{subfigure}
    \caption{\label{fig:HayFctq0M}  Hayflick limits~\eqref{eq:hayflick} for different values of $q_0$ (left) and $M$ (right) leading to $\lambda_0 < 0$,  for the model with the telomerase mechanism given by \eqref{eq:param1telo} with the default values $M=600$ (left picture) and $q_0=0.7$ (right picture) and all other parameters being as in~\eqref{eq:param1} and~\eqref{eq:param2}. }
\end{figure}

On the other hand, when $\lambda_0$ is positive, the Hayflick limit is infinite. This is the case, for example, for the default set of parameters given at the start of the subsection. Figure~\ref{fig:pop_dyn_K46} represents the evolution of certain functions of the distribution of telomere lengths in cells for this default set of parameters. For convenience, we have plotted the distribution of the minimum, maximum and mean (in each cell) of the telomere lengths.
We observe the convergence of the distribution toward the distribution $\nu$, as stated in Theorem~\ref{thm:mainresult}.

Finally, in Figure~\ref{fig.QSD.wrt.K}, we have plotted these distributions at the final time of the simulation for different values of $K$. 
We see that, the higher the number of chromosomes $K$, the higher the variance of the telomere length distribution.  In particular, this figure shows that as the number of chromosomes increases, the average of the distribution of the minimum telomere length decreases. Since in addition, the deactivation and removal of cells from the system are determined by the length of the shortest telomeres in the cells, this entails that $\lambda_0$ decreases with $K$, as demonstrated by  Table~\ref{table:lambda_0_wrt_K}.

\begin{figure}
\includegraphics[width=12cm, trim = 4.4cm 0cm 1.9cm 0cm, clip=true]{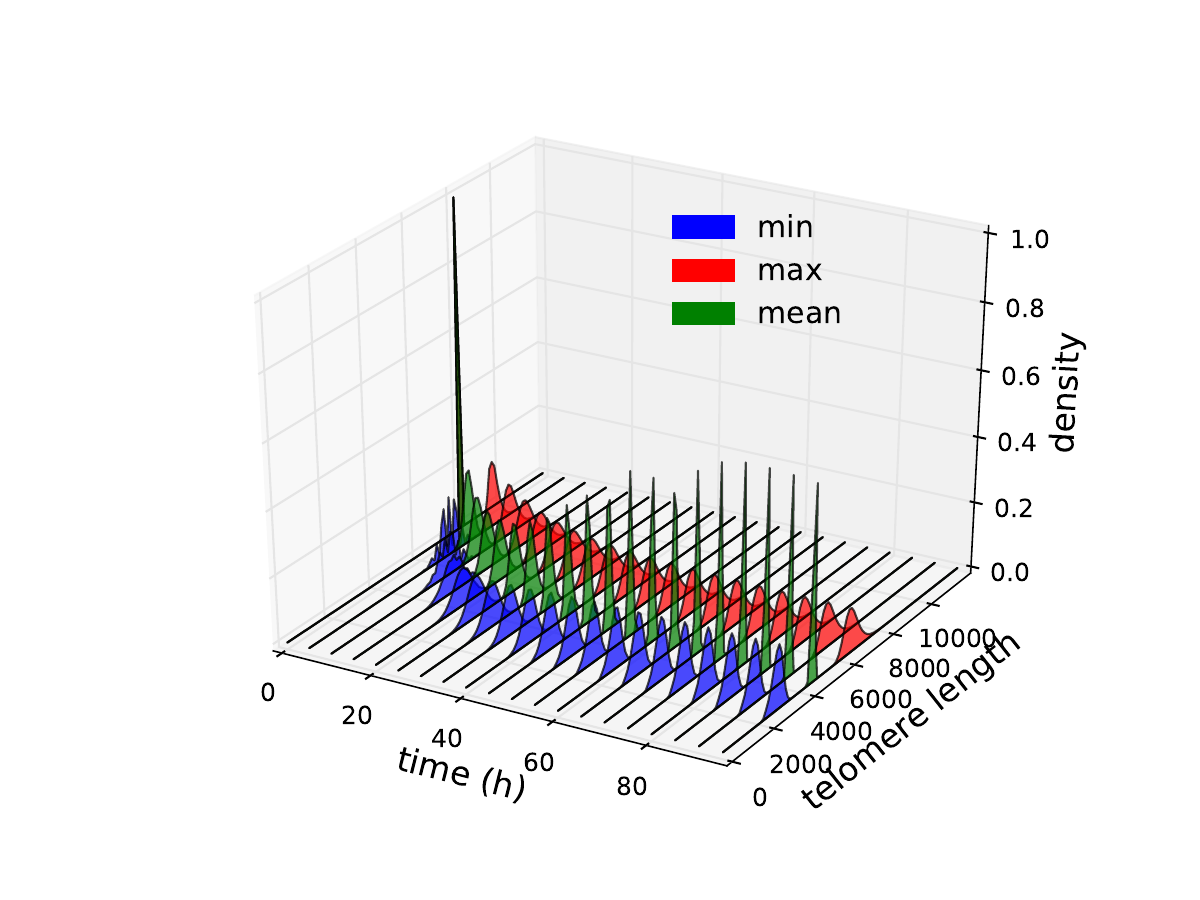}
\caption{\label{fig:pop_dyn_K46}Dynamics of certain characteristics of the (normalized) population density for the model with the telomerase mechanism given by \eqref{eq:param1telo} with the default values $q_0=0.7$ and $M=600$ and all other parameters being as in~\eqref{eq:param1} and~\eqref{eq:param2} (in particular with $K=46$). More precisely, we plot the normalization of the distributions $\mathbb E_{(c_0,\nsen)}(\sum_{i=1}^{N_t}\mathbf 1_{g(c_i(t), x_i(t))\in\cdot})$, with $g(c_i(t),x_i(t))=\mathbf 1_{x_i(t)=\nsen}\min c_i(t)$ (blue), $g(c_i(t),x_i(t))=\mathbf 1_{x_i(t)=\nsen}\max c_i(t)$ (red) and $g(c_i(t),x_i(t))=\mathbf 1_{x_i(t)=\nsen}\text{mean } c_i(t)$ (green), where $\text{mean } c_i(t)$ denotes the average telomere length in $c_i(t)\in (\mathbb N\times \mathbb N)^K$.}
\end{figure}

\begin{figure}
	\begin{subfigure}{0.215\textwidth}
 	\includegraphics[height=3.3cm]{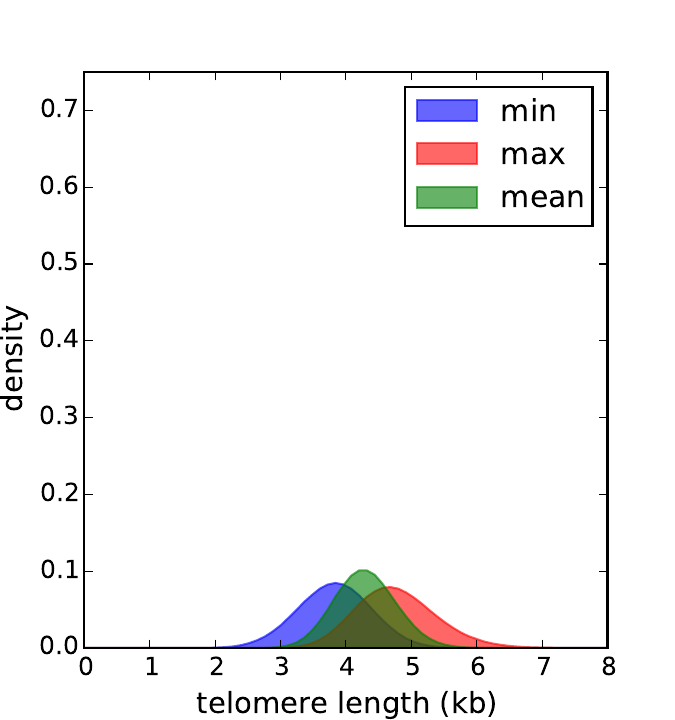}
	\caption{$K=1$.}
	\end{subfigure}
	\begin{subfigure}{0.188\textwidth}
 	\includegraphics[height=3.3cm, trim = 1.35cm 0cm 0cm 0cm, clip=true]{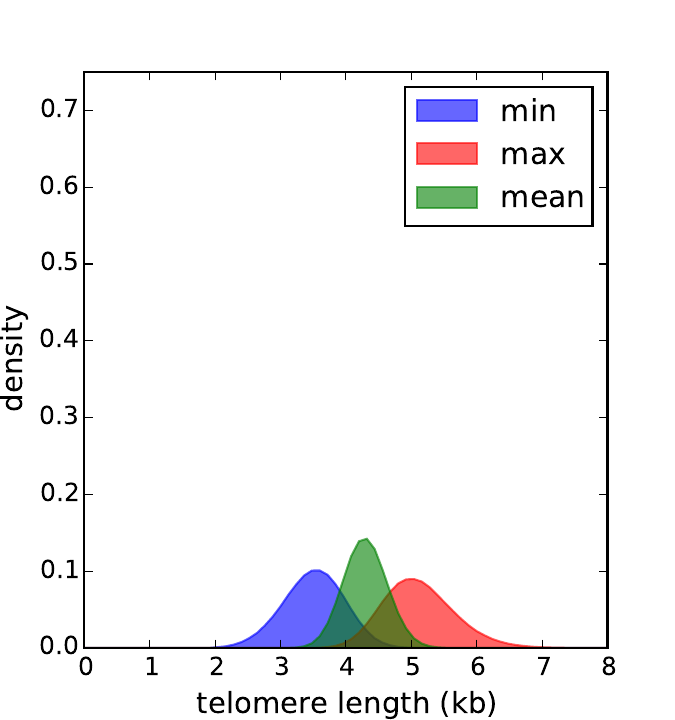}
	\caption{$K=2$.}
	\end{subfigure}
	\begin{subfigure}{0.188\textwidth}
 	\includegraphics[height=3.3cm, trim = 1.35cm 0cm 0cm 0cm, clip=true]{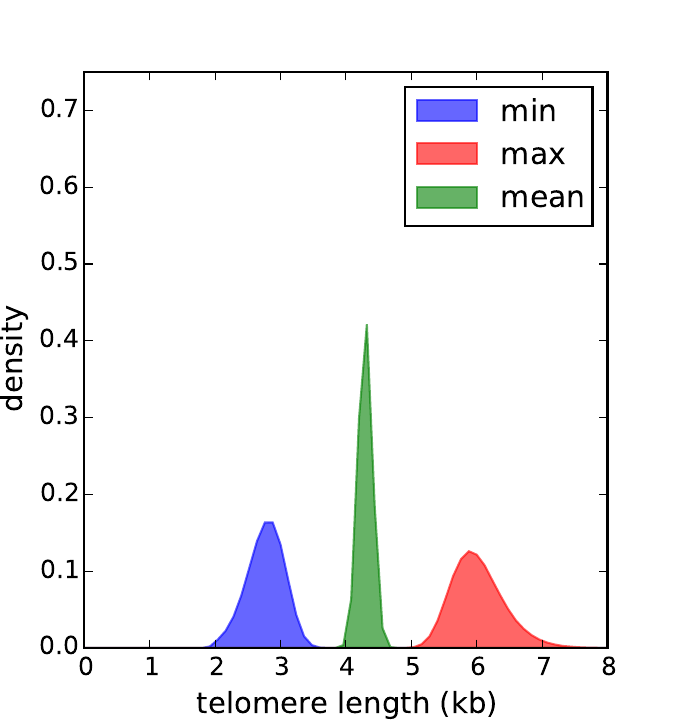}
	\caption{$K=20$.}
	\end{subfigure}
	\begin{subfigure}{0.188\textwidth}
 	\includegraphics[height=3.3cm, trim = 1.35cm 0cm 0cm 0cm, clip=true]{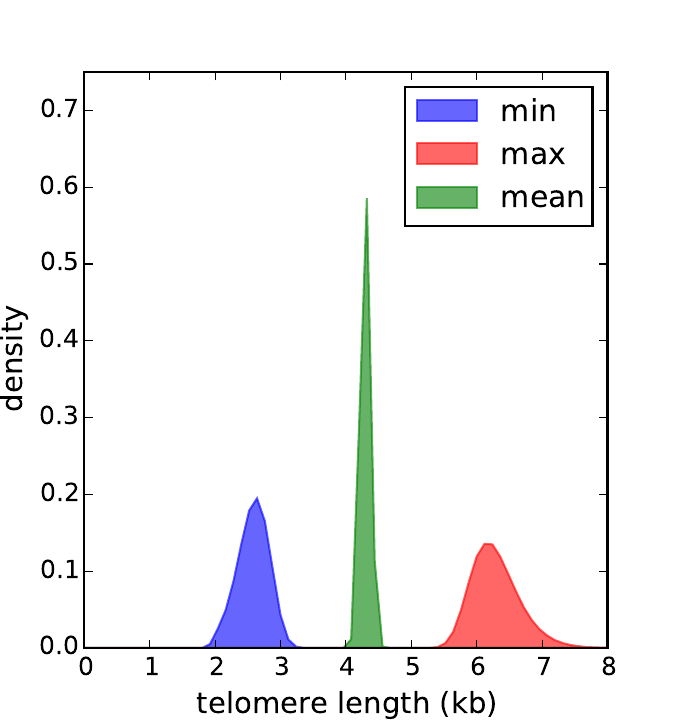}
	\caption{$K=46$.}
	\end{subfigure}
	\begin{subfigure}{0.188\textwidth}
 	\includegraphics[height=3.3cm, trim = 1.35cm 0cm 0cm 0cm, clip=true]{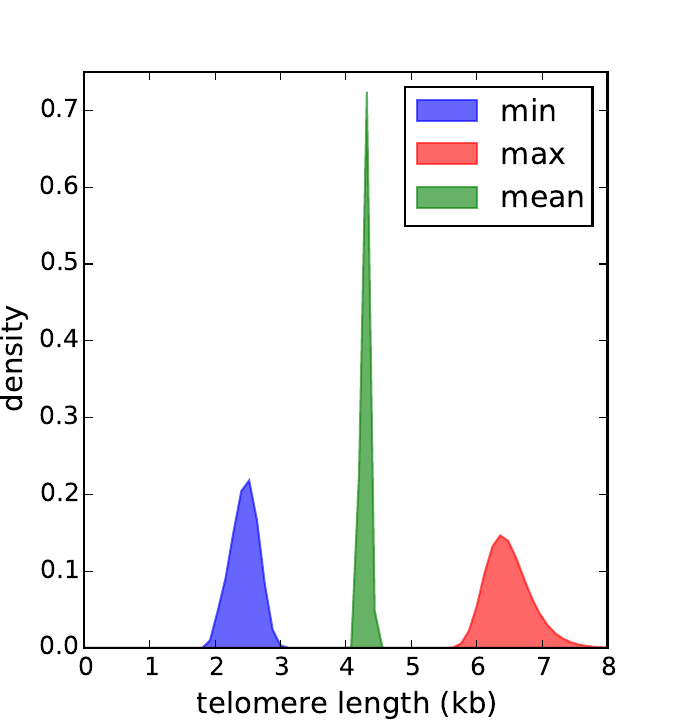}
	\caption{$K=90$.}
	\end{subfigure}
\caption{\label{fig.QSD.wrt.K}  Distribution of the minimal (blue), mean (green) and maximal (red) lengths of telomere for the limit distribution $\nu$ given by Theorem~\ref{thm:mainresult}, for different values of the number of chromosomes $K$ in cells for the model with the telomerase mechanism given by \eqref{eq:param1telo} with the default values $q_0=0.7$ and $M=600$ and all other parameters being as in~\eqref{eq:param1} and~\eqref{eq:param2}. }
\end{figure}

\begin{table}[h!]
\begin{tabular}{c|c}
Chromosomes number $K$ & Malthusian parameter $\lambda_0$\\
\hline
1 & 0.9992918\\
2 & 0.9986076\\
20 & 0.9863393\\
46 & 0.9692310\\
90 & 0.9416240 
\end{tabular}
\caption{\label{table:lambda_0_wrt_K}Malthusian parameter $\lambda_0$ defined in~\eqref{eq.def.lambda0} for the different values of $K$ used in Figure~\ref{fig.QSD.wrt.K}.}
\end{table}

\section{Discussion}
\label{sec:discussion}
We built a probabilistic individual based model for the telomere dynamics in a population of cells. We studied theoretically and numerically the dynamics of this model, with and without  telomerase. In particular, we estimated numerically the effect of several parameters on the Hayflick limit and the growth of the size of the cell population. 

Although the probabilistic model was built to mimic the biological mechanisms involved in telomere length dynamics, we made several assumptions in order to make the model more tractable for theoretical and numerical analysis, and in particular, to keep the number of parameters in the model sufficiently small. However, in some situations, one may wish to consider the following mechanisms: abrupt telomere shortening \citep{vspoljaric2019mathematical}, 
single strand breaks \citep{von2000accumulation} and oxidative  stress \citep{von2002oxidative, arkus2005mathematical}. This would require one to modify our model to allow $\overhang$ to be random and to include rare abrupt events. 
In our model, we assumed that chromatids are divided randomly between daughter cells during mitosis and there is no biological evidence against it. In some organism stem cells, biased DNA segregation during mitosis has been observed in vitro \citep{karpowicz2009germline, ranjan2022}. These observations only concern stem cells and still need to be confirmed in vivo, but may also be included in our model by removing the assumption that the random variables $B_j$ are not independent.

Additional mechanisms may also be studied mathematically and numerically. Namely, one may take into account the reactivation of telomerase in inactive cells (e.g. Telomerase reverse transcriptase in cancer cells~\citep{Dratwa2020}), fluctuation of the telomerase intensity through time or the population (depending on the environment or on the state of the population, leading to probabilistic models in random environments or with density dependence) or on/off  commutator mechanisms, typically modelled by a Piecewise Deterministic Markov Process representing the density of a commutator enzyme (see~\cite{tucey2014regulated} for a biological description of commutators). To model logistic constraints, one may also wish to consider a spatial model (see e.g.~\cite{David2021}) or include the dynamics of a shared limited resource in the model (as e.g. for chemostat models, see~\cite{fritsch2015modeling}). Finally, one may study the influence of the dependence of the overhang $\omega$ on the length of the telomere, empirically observed by~\cite{huffman2000telomere,rahman2008telomeric}. 

We finally mention the possibility of modelling an age-dependent process. At the expense of additional technicalities and an increase in model complexity, one may choose to study an age-dependent model, where the different rates described in our model depend on the age of the cell, see e.g.  the work of \cite{Olaye2024}. From a probabilistic perspective, this lies in the setting of Crump-Mode-Jagers branching processes~\citep{jagers1975branching}.

In our numerical simulations, we used parameter values inspired by empirical measures from the biological literature. Another natural approach, which will be the subject of further research, is to estimate the parameters from empirical data.
Since most experimental data concerning human cells are available for differentiated cells, such as leukocytes (see e.g.~\cite{Toupance2019} where the distribution of telomere length in humans is studied), in future work, we will study  population processes to model multi-tissue organisms where stem cells differentiate into specialized cells.

\section*{Statements and Declarations}

\subsection*{Acknowledgements}
The authors would like to thank Anne G\'egout-Petit for her support and useful discussions throughout the project.
This work was supported partly by the french PIA project ``Lorraine Universit\'e d'Excellence'', ANR-15-IDEX-04-LUE and received government funding managed by the Agence Nationale de la Recherche under the France 2030 program, reference ANR-23-EXMA-0005. 
Simulations are run on the babycluster of the Institut \'Elie Cartan de Lorraine.

\subsection*{Competing interest}
The authors have no relevant financial or non-financial interests to disclose.

\subsection*{Compliance with Ethical Standards}
The authors report there are no conflict of interest to declare.

\subsection*{Data availability statement}
No datasets were generated or analysed during the current study.

\appendix
\section{Dynkin's lemma}\label{sec:Dynkin}
In this section, we discuss a useful result that is used in several places in the article. The full version of this result and further details can be found in \cite[Theorem 2.1]{NTEbook}. We note that this result is sometimes referred to as Dynkin's lemma due to the fact that a slightly less general version of this result was first introduced by \cite{Dynkin}.   

Let us first introduce some necessary notation. Let $E$ be a Polish space (it is possible to make $E$ even more general but it will not be necessary for this article), which will denote the state space. We will also append an extra point $\dagger \notin E$ to $E$, that will denote the ``cemetary'' state, i.e. where particles are sent when they die. We let $\xi = (\xi_t, t \ge 0)$ denote a Markov process on $E$ with lifetime $\zeta := \inf\{t \ge 0 : \xi_t = \dagger\}$. Note that we consider $\dagger$ to be an absorbing state meaning that $\xi_t \in \{\dagger\}$ for all $t \ge \zeta$. We also let $\mathbf P_x$ denote the law of $\xi$ conditional on $\xi_0 = x$ with corresponding expectation operator $\mathbf E_x$.

In what follows, we will let $B(E)$ denote the space of bounded, measurable functions on $E$ such that $f(\dagger) = 0$ for all $f \in B(E)$. We will also let $B^+(E)$ denote the subset of $B(E)$ consisting only of non-negative functions. Then, for $f \in B^+(E)$, $t \ge 0$ and $x \in E$, we define
\[
  \mathtt Q_t[f](x) := \mathbf E_x[f(\xi_t)]
\]
to be the expectation semigroup associated to $(\xi, \mathbf P_x)$. We now extend the notion of expectation semigroups. Suppose that $\gamma \in B(E)$ and define, for $x \in E$, $t \ge 0$ and $f \in B(E)$, 
\[
  \mathtt Q_t^\gamma[f](x) := \mathbf E_x\left[{\rm e}^{\int_0^t \gamma(\xi_s){\rm d}s}f(\xi_t) \right].
\]
The weight ${\rm e}^{\int_0^t \gamma(\xi_s){\rm d}s}$ is often referred to as a multiplicative potential. This potential can be thought of as weighting the Markov process, consequently giving path a notion of importance. For example, if $\sup_{x \in E}\gamma(x) \le 0$, then the multiplicative potential acts as a penalisation term, since this weight will be less than or equal to one. 

Often, the expectation semigroup $(\mathtt Q_t^\gamma, t \ge 0)$ forms the basis of an evolution equation:
\begin{equation}\label{eq:Dynkin1}
    \chi_t(x) := \mathtt Q_t^\gamma[g](x) + \int_0^t \mathtt Q_s^\gamma[h_{t-s}](x) {\rm d}s, \qquad t \ge 0, x \in E, g \in B^+(E), 
\end{equation}
and where $h: [0, \infty) \times E \to [0, \infty)$ satisfies $\sup_{s \le t}|h_s| \in B^+(E)$. 

We are now in a position to state a result that will be useful throughout the article, which offers an alternative form of \eqref{eq:Dynkin1}.

\smallskip

\begin{thm}\label{thm:Dynkin}
Suppose that $|\gamma| \in B^+(E)$ and that $\sup_{s \le t}|h_s| \in B^+(E)$ for all $t \ge 0$. If $(\chi_t, t \ge 0)$ is represented by \eqref{eq:Dynkin1}, then it also solves
\begin{equation}\label{eq:Dynkin2}
  \chi_t(x) = \mathtt Q_t[g](x) + \int_0^t \mathtt Q_s[h_{t-s} + \gamma \chi_{t-s}](x){\rm d}s, \qquad t \ge 0, x \in E.  
\end{equation}
\end{thm}

\smallskip

Note that under an extra condition, the converse of this statement is true however, we will not state the details since we only use the result given above in this article. 

The full proof of the above result is essentially integration by parts and we refer the interested reader to \cite{NTEbook}. A quick calculation that can help the reader to convince themselves of this result is to compute the generator associated with \eqref{eq:Dynkin1} to obtain a differential version of \eqref{eq:Dynkin2}.

The above result can be thought of as a form of calculus by which the multiplicative potential in $\mathtt Q^\gamma$ is removed and appears instead as an additive potential in the integral term. Heuristically, it shows us how to transfer mass within the evolution equation, resulting in two different stochastic interpretations of the solution to \eqref{eq:Dynkin1} or equivalently, \eqref{eq:Dynkin2}. For simplicity, let us suppose that $h \equiv 0$. Then, on the one hand, \eqref{eq:Dynkin2} suggests a solution given by the expectation of a branching process whose particles move according to the semigroup $\mathtt Q$ and mass is created according to $\gamma$. On the other hand, \eqref{eq:Dynkin1} offers another interpretation of the solution as the expectation of a single particle, weighted by the potential $\gamma$, that is, the solution is given by the semigroup $\mathtt Q^\gamma$. The inclusion of $h_{t-s}$ in the integral term simply allows for a more general setting, which we now illustrate with an example. Further examples can be found in \cite[Chapter 2]{NTEbook}.

\medskip

\begin{example}
Consider the case where $\xi$ is a Brownian motion in $E = \mathbb R$ such that, at rate $\alpha \in B^+(\mathbb R)$, the particle jumps to a new position in $\mathbb R$ according to the law $\mu$. Hence, if $T$ is the time of the first such jump, then 
\begin{equation}\label{eq:rate}
    {\rm Pr}(T > t | \sigma(B_s, s \le t)) = {\rm e}^{-\int_0^t \alpha(B_s){\rm d}s}, \qquad t \ge 0.
\end{equation}
Let $(X_t, t \ge 0)$ denote this process with law $\mathbb P_x$ conditional on $X_0 = x \in \mathbb R$ and set $\chi_t(x) = \mathbb E_x[f(X_t)]$ for $f \in B^+(\mathbb R)$. By conditioning on whether $T \le t$ or $T > t$, it can be shown (see p.23 of \cite{NTEbook} for the details) that $\chi_t(x)$ satisfies \eqref{eq:Dynkin1} with 
\[
  \mathtt Q_t^{-\alpha}[f](x) = \mathbf E_x\left[{\rm e}^{-\int_0^t \alpha(B_s){\rm d}s}f(B_t) \right]
\]
and 
\[
  h_{t-s} = \alpha(\cdot)\int_{\mathbb R}\chi_{t-s}(y) \mu({\rm d}y).
\]
Theorem \ref{thm:Dynkin} then shows that, equally, $\chi_t(x)$ satisfies
\begin{equation*}
    \chi_t(x) = \mathbf E_x[f(B_t)] + \int_0^t \mathbf E_x\left[\alpha(B_s)\left(\int_{\mathbb R}\chi_{t-s}(y) \mu({\rm d}y) - \chi_{t-s}(x)\right)\right]{\rm d}s.
\end{equation*}
This second equation can be interpreted as follows. At time $t$, either no jump has occurred and so the particle is at position $B_t$, or with rate $\alpha$, the `parent' particle is replaced by a `new' particle whose position is given by $\mu$.
\end{example}

\section{Algorithm}\label{sec:algo}
Here we present the interacting particle approximation scheme used to produce the simulations in Section \ref{sec:simulations}. The following algorithm allows one to simulate a population with fixed size, $N$ say. Initially, each particle evolves according to an independent copy of a (sub)Markov process. When a particle is killed, it is resampled from the surviving population and the particles then continue to evolve independently. In this case, the process evolves according to $(\bar{\mathcal C}_t, \bar{\mathcal X}_t)_{t \ge 0}$ defined in the proof of Theorem \ref{thm:mainresult}. We refer the reader to the article of \cite{DelMoral2004, DelMoral2013} for further details of algorithms of this type.

\begin{rem}
\label{rem:generalization}
    The estimation of $\lambda_0$ from Algorithm~\ref{algo.ibm} by the methods of~\cite{DelMoral2004, DelMoral2013} is valid under the convergence stated in Theorem~\ref{thm:mainresult}.
    Despite the fact that Theorem \ref{thm:mainresult} was proved under restrictive assumptions (including $\lambda_0 > 0$), numerical simulations suggest that the convergence also holds true for all the parameters choices of Section~\ref{sec:simulations}. Based on these considerations, we allow ourselves to use the same numerical methods to estimate the parameter $\lambda_0$ across Section~\ref{sec:simulations}. In particular, we
    consider certain parameter regimes where $\lambda_0 \le 0$, in order to illustrate that our result should hold under less restrictive assumptions, and to provide the reader with a more complete picture.  We leave the question of the possible generalization of Theorem~\ref{thm:mainresult} to these parameters open.
\end{rem}

\begin{algorithm}[H]
\small
\begin{algorithmic}
\STATE sample $(c_i(0),x_i(0))_{i \in\llbracket 1,N\rrbracket}=\left(c_0, \nsen\right)_{i\in\llbracket 1,N\rrbracket}$
 \COMMENT{initialization}
\STATE $t\leftarrow 0$
\WHILE {$t\leq T$}
  \STATE $\tau \leftarrow \sum_{i=1}^N 1_{x_i(t)=\nsen} (2b_{\nsen}(c_i(t)) + s_{\nsen}(c_i(t)) + d_{\nsen} (c_i(t))+
  					1_{x_i(t)=\sen} d_{\sen} (c_i(t))$
  \STATE $\Delta t \sim Exp(\tau)$  
  \STATE $i \sim \sum_i \frac{1_{x_i(t)=\nsen} (2b_{\nsen}(c_i(t)) + s_{\nsen}(c_i(t)) + d_{\nsen} (c_i(t))+
  					1_{x_i(t)=\sen} d_{\sen} (c_i(t))}{\tau} \delta_i$
  
  \STATE $(c_j(t+\Delta t), x_j(t+\Delta t))_{j\in\llbracket 1,N\rrbracket, j\neq i}=(c_j(t), x_j(t))_{j\in\llbracket 1,N\rrbracket, j\neq i}$
  
  \STATE $u\sim \mathcal{U}[0,1]$ 
  \IF {$x_i(t)=\nsen$ and $u\leq \frac{2b_{\nsen}(c_i(t))}{2b_{\nsen}(c_i(t)) + s_{\nsen}(c_i(t)) + d_{\nsen} (c_i(t))}$}
      \STATE \COMMENT{cell division} 
      \STATE $(m_{j},n_{j})_{j\in\llbracket 1,K\rrbracket} = c_i(t)$
      \STATE $B_j \sim Ber(\nicefrac 12)$, $j=1, \dots, K$
	 \STATE $L_j \sim \mu_f$, $L'_j \sim \mu_f$, $j=1, \dots, K$
      \STATE $\chi_j \sim Ber(q(m_{j}-\overhang B_j))$, $j=1, \dots, K$
      \STATE $\chi'_j \sim Ber(q(n_{j}-\overhang (1-B_j)))$, $j=1, \dots, K$
      \STATE $m_{j} \leftarrow m_{j}-\overhang B_j + L_j \chi_j$, $j=1, \dots, K$
      \STATE $n_{j} \leftarrow n_{j}-\overhang (1-B_j) + L'_j \chi'_j$, $j=1, \dots, K$
      \IF{ $\min_{1\leq j \leq K}\{m_{j},n_{j}\}\leq L_{\min}$}
      	 \STATE \COMMENT{non-viable cell}
      		\STATE $(c_i(t+\Delta t),x_i(t+\Delta t)) \leftarrow ((m_{j},n_{j})_{j\in\llbracket 1,K\rrbracket }, \sen)$
      \ELSE
      	  \STATE \COMMENT{viable cell}
	      \STATE $(c_i(t+\Delta t),x_i(t+\Delta t)) \leftarrow ((m_{j},n_{j})_{j\in\llbracket 1,K\rrbracket }, \nsen)$
	   \ENDIF
  \ELSE\IF{$x_i(t)=\nsen$ and $u\leq \frac{2b_{\nsen}(c_i(t))+s_{\nsen}(c_i(t))}{2b_{\nsen}(c_i(t)) + s_{\nsen}(c_i(t)) + d_{\nsen} (c_i(t))}$}
  		\STATE  \COMMENT{cell deactivation}
  		\STATE $(c_i(t+\Delta t),x_i(t+\Delta t))\leftarrow (c_i(t),\sen)$
  \ELSE
  		 \STATE \COMMENT{cell removal: the cell is uniformly sampled on existing cells}
      \STATE $(c_i(t+\Delta t),x_i(t+\Delta t)) \sim \mathcal{U}((c_j(t+\Delta t),x_j(t+\Delta t))_{j\in\llbracket 1,K\rrbracket, j\neq i})$
  \ENDIF\ENDIF
  \STATE $t \leftarrow t+\Delta t$
\ENDWHILE
\end{algorithmic}
\caption{\label{algo.ibm} Fleming-Viot type scheme: simulation of $N$ processes $(\mathcal{C}_t, \mathcal{X}_t)_{t \ge 0}$ defined in Section~\ref{sec:manytoone}. Each process is simulated by a Gillespie Algorithm \citep{gillespie1977a}. When one process goes extinct, it is replaced by one alive cell, drawn uniformly among the alive processes.}
\end{algorithm}

\section{Notation}
\label{sec:notation}
Here we provide the reader with a table of notation for the parameters that are used in the numerical simulations in Section~\ref{sec:simulations}.
 We give the notation used, a description and the page number where it was first introduced.
\begin{table}[h]
\begin{tabular}{c l l}
{\bf Notation} &  \multicolumn{1}{c}{\textbf{Description}}  &  {\bf Reference}\\
\hline 
$(m, n)$ & Chromosome indexed by its telomeres& \ p.\pageref{notation:mn}\\
$\overhang$ & Overhang & \ p.\pageref{notation:overgang}\\
$K$ & Number of chromosomes per cell &\ p.\pageref{notation:K}\\
$L_{min}$& Minimal telomere length of active cells & \ p.\pageref{notation:Lmin} \\
$(c,\nsen) = \left((m_{j}, n_{j})_{j = 1}^K,\nsen\right)$& 
Active cell & \ p.\pageref{notation:sennsen}\\
$(c,\sen) = \left((m_{j}, n_{j})_{j = 1}^K,\sen\right)$& 
Non active cell& \ p.\pageref{notation:sennsen}\\
$c_0$ & Telomere lengths of each initial cell & \ p.\pageref{notation:c0}\\
$q(\ell)$ & Probability that telomerase acts & \ p.\pageref{notation:qell}\\
$q_0$ & Parameter of $q(\ell)$ & \ p.\pageref{eq:param1telo} \\
$\mu_f$ & Telomere lengthen law & \ p.\pageref{notation:mu_f} \\
$M$ & Maximal telomere lengthen & \ p.\pageref{eq:param1telo} \\
$b_\nsen$ & Division rate & \ p.\pageref{notation:bnsen}\\
$s_\nsen$ & Deactivation rate & \ p.\pageref{notation:snsen}\\
$r_s$ & Parameter of $s_{\nsen} $ & \ p.\pageref{notation:rs}\\
$d_x$ & Removal rate & \ p.\pageref{notation:dx} \\
$N_t$& Cell number in the population & \ p.\pageref{notation:Nt}\\
$N_\infty$& Final population size & \ p.\pageref{notation:Ninfty}\\
$H$ & Hayflick limit & \ p.\pageref{notation:H} \\
$\lambda_0$ & Malthusian parameter & \ p.\pageref{eq.def.lambda0}\\
\end{tabular}
\end{table}

\end{document}